\documentclass{article}

\usepackage{microtype}
\usepackage{graphicx}
\usepackage{subcaption}
\usepackage{booktabs} 
\usepackage{multirow}
\usepackage{tabularx}
\usepackage{tablefootnote}
\usepackage{hyperref}

\usepackage{algorithm}
\usepackage{algorithmic}

\usepackage[preprint]{icml2026}

\usepackage{amsmath}
\usepackage{amssymb}
\usepackage{mathtools}
\usepackage{amsthm}

\usepackage[capitalize,noabbrev]{cleveref}

\theoremstyle{plain}
\newtheorem{theorem}{Theorem}[section]
\newtheorem{proposition}[theorem]{Proposition}

\theoremstyle{definition}

\theoremstyle{remark}
\newtheorem{remark}[theorem]{Remark}

\usepackage[textsize=tiny]{todonotes}

\icmltitlerunning{Submission and Formatting Instructions for ICML 2026}

\begin{document}

\twocolumn[
  \icmltitle{Low-Complexity Algorithm for Stackelberg Prediction Games \\with Global Optimality}



  \icmlsetsymbol{equal}{*}

  \begin{icmlauthorlist}
    \icmlauthor{Tong Wei}{equal,sch,yyy}
    \icmlauthor{Yangjie Xu}{equal,yyy}
    \icmlauthor{Xinlin Wang}{yyy}
    \icmlauthor{Pin-Han Ho}{comp}
    \icmlauthor{Bhavani Shankar M.R.}{yyy}
    \icmlauthor{Radu State}{yyy}
    \icmlauthor{Björn Ottersten}{yyy}
  \end{icmlauthorlist}

  \icmlaffiliation{yyy}{Interdisciplinary Centre for Security, Reliability and Trust (SnT), University of Luxembourg, L-1855 Luxembourg City, Luxembourg.}
  \icmlaffiliation{comp}{Shenzhen Institute for Advanced Study, University of Electronic Science and Technology of China, Shenzhen, Guangdong, China.}
  \icmlaffiliation{sch}{School of Electrical and Electronic Engineering, Nanyang Technological University, Singapore 639798.}

  \icmlcorrespondingauthor{Tong Wei}{tong.wei@ntu.edu.sg}
  \icmlcorrespondingauthor{Yangjie Xu}{yangjie.xu@uni.lu}

  \icmlkeywords{Stackelberg Prediction Games, Machine Learning, ADMM}

  \vskip 0.3in
]



\printAffiliationsAndNotice{}  

\begin{abstract}
Stackelberg prediction games (SPGs) model strategic data manipulation in adversarial learning via a leader--follower interaction between a learner and a self-interested data provider, leading to challenging bilevel optimization problems. Focusing on the least-squares setting (SPG-LS), recent work shows that the bilevel program admits an equivalent spherically constrained least-squares (SCLS) reformulation, which avoids costly conic programming and enables scalable algorithms. In this paper, we develop a simple and efficient alternating direction
method of multiplier (ADMM) based solver for the SCLS problem. By introducing a consensus splitting that separates the quadratic objective from the spherical constraint, we obtain an augmented Lagrangian formulation with closed-form updates: the primal quadratic step reduces to solving a fixed shifted linear system, the constraint step is a projection onto the unit sphere, and the dual step is a lightweight scaled ascent. The resulting method has low per-iteration complexity and allows pre-factorization of the constant system matrix for substantial speedups. Experiments demonstrate that the proposed ADMM approach achieves competitive solution quality with significantly improved computational efficiency compared with existing global solvers for SCLS, particularly in sparse and high-dimensional regimes.
\end{abstract}

\section{Introduction}
Machine learning (ML) systems increasingly operate in adversarial environments where training datasets are not only passively generated, but strategically shaped by self-interested agents \cite{biggio2018wild,perdomo202performative}. In many high-stakes applications, data providers can observe or infer the learner’s decision rule and then adapt their behavior, creating a feedback loop between model design and data generation \cite{liu2019delayed}. Stackelberg prediction games (SPGs) provide a principled framework for strategic interactions between a learner and an adversarial data provider who manipulates training data after observing the learner’s model \cite{bruckner2011stackelberg,zhou2016modeling,bishop2020advances}. Such leader–follower dynamics arise in a wide range of security-critical learning applications, including spam filtering \cite{dalvi2004adversarial}, malware detection \cite{biggio2013evasion}, and strategic reporting \cite{gast2020linear}.  
However, SPGs yield challenging bilevel least-squares optimization problems that are often non-deterministic polynomial-time (NP) hard, even with the convex objective function \cite{jeroslow1985the,wang2021fast}.

Towards this end, initial investigation focuses on a tractable subclass of Stackelberg prediction games with least-squares losses (SPG-LS), where both the learner and the follower use squared loss, and the follower is regularized to limit the magnitude of feature manipulation \cite{bishop2020advances}.  Early methods for obtaining global solutions reduce SPG-LS to single-level convex programs, such as semidefinite programming (SDP) \cite{bishop2020advances} and second-order cone formulations (SOCP) \cite{wang2021fast}, but these approaches become expensive at scale because they require repeated conic solves and, in some instances, large-scale spectral decompositions. As a result, their practical utility is limited in high-dimensional settings.

Recently, a key breakthrough is the spherically constrained least-squares (SCLS) reformulation \cite{wang2022solving}. By applying a nonlinear change of variables, the bilevel optimization problem of SPG-LS can be equivalently expressed as minimizing a least-squares objective over the unit sphere. This reformulation eliminates the reliance on conic optimization and instead supports iterative algorithms whose main computational work is standard linear-algebra operations, leading to markedly improved scalability in high-dimensional regimes. More recently, SCLS has also gained stronger theoretical support. Under the probabilistic data models, error analyses show that the solution recovered by SCLS achieves bounded estimation error, providing formal justification for its reliability beyond empirical evidence \cite{li2024error}. 

Despite these advances, solving SCLS efficiently and robustly in a large-scale setup remains practically important. Existing SCLS solvers in the literature primarily rely on Krylov-subspace schemes and Riemannian trust-region Newton-type methods which are moderately effective but may require careful tuning depending on the regime, sophisticated manifold optimization implementations, nontrivial linear-algebra primitives and lack of performance guarantee \cite{wang2022solving}. 
This motivates us to explore an alternative solver that (i) exploits the special quadratic structure of SCLS; (ii) admits simple closed-form sub-steps and low-complexity algorithms; (iii) is amenable to large-scale implementations with predictable per-iteration cost.

In this paper, we build on the SCLS reformulation of SPG-LS and develop an low-complexity alternating direction method of multipliers (ADMM) algorithm to globally solve the resulting SCLS problem. Specifically, we recast SCLS into a consensus form by introducing an auxiliary variable to isolate the spherical constraint, leading to an augmented-Lagrangian splitting where each update is available in closed form. Based on this, the $\mathbf{r}$-subproblem reduces to solving a shifted linear system with a \emph{fixed} coefficient matrix, the $\mathbf{s}$-subproblem becomes a projection onto the unit sphere, and the dual step is a standard scaled ascent. This yields a lightweight iterative method whose dominant cost is a pre-factorizable linear solve and cheap vector operations thereafter.

Our main contributions are summarized as follows:
\begin{itemize}
  \item \textbf{Closed-form ADMM for SCLS.}
  We develop a low-complexity ADMM algorithm for the consensus SCLS reformulation with fully explicit closed-form updates, yielding a simple and scalable implementation.
  \item \textbf{Global optimality and convergence based on strong duality.}
  We prove strong duality for the consensus reformulation and show that a Lagrangian saddle point corresponds to an optimal solution of SCLS, making the KKT conditions necessary and sufficient. Under mild regularity conditions, we further show that ADMM monotonically decreases the augmented Lagrangian and converges to a primal-dual limit point that satisfies the global KKT conditions and is therefore globally optimal.
\item \textbf{Low-complexity inverse-free implementation.}
  The computational bottleneck of ADMM is the $\mathbf{r}$-update, which requires solving a linear system with a fixed positive-definite coefficient matrix. Toward this end, we avoid explicit matrix inversion by computing a one-time Cholesky factorization of the constant system matrix and then performing only two triangular solves per iteration. This inverse-free implementation substantially reduces complexity and runtime, especially in large-scale and sparse regimes.
  \item \textbf{Extensive performance validation.}
  Experiments on both synthetic and real datasets show that our method matches the accuracy of state-of-the-art SCLS solvers while substantially reducing runtime, with gains most pronounced in high-dimensional and sparse data regimes.
\end{itemize}

\section{Preliminaries}
\label{sec:preliminaries}

This section introduces the problem formulation of SPG-LS and briefly reviews standard single-level reformulations that enable local or global solutions. 

\begin{figure}[t]
\begin{center}
\centerline{\includegraphics[width=1\columnwidth]{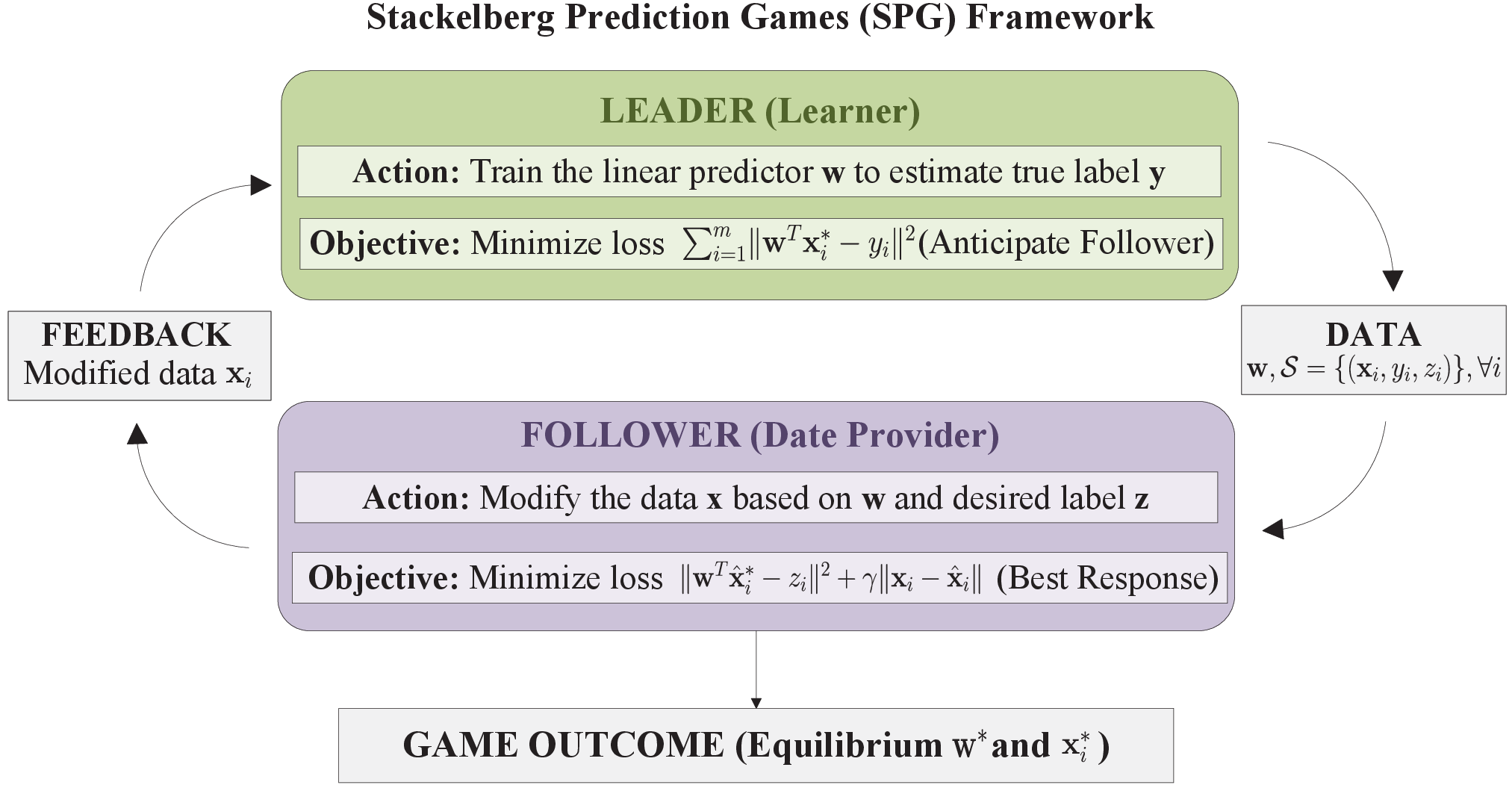}}
    \caption{Illustration of Leader–Follower Interaction in Stackelberg Prediction Games Framework.}
    \label{dig}
\end{center}
\end{figure}

\subsection{SPG-LS Problem setup}

We consider a Stackelberg prediction game between a \emph{learner} (leader) and a \emph{data provider} (follower). The learner first observes an initial training set $\mathcal{S}=\{(\mathbf{x}_i,y_i,z_i)\}_{i=1}^m$,
where $\mathbf{x}_i\in\mathbb{R}^n$ is the feature vector, $y_i\in\mathbb{R}$ is the true label, and $z_i\in\mathbb{R}$ denotes the provider's desired label. Based on $\mathcal{S}$, the learner commits to a linear predictor $\mathbf{w}\in\mathbb{R}^n$ by minimizing its prediction loss while anticipating the provider's response. After observing $\mathbf{w}$, the provider strategically \emph{modifies the features} to better align the learner's predictions with $z_i$, subject to a regularization term that limits the extent of manipulation. The learner is then trained on the resulting modified data, and the interaction continues until reaching an equilibrium $(\mathbf{w}^\star,{\mathbf{x}}^\star)$; See the detailed leader-follower interaction in \textit{Figure~\ref{dig}}.

The setting can be cast as a Stackelberg prediction game \cite{bruckner2011stackelberg,bishop2020advances}, in which a learner and a strategic data provider interact sequentially. The learner first commits to a prediction rule, anticipating how the provider will react. Given the learner's model $\mathbf{w}$, the data provider strategically modifies each feature vector to $\hat{\mathbf{x}}_i$ by trading off (i) forcing the prediction $\mathbf{w}^{T} \hat{\mathbf{x}}_i$ toward $z_i$ and (ii) keeping $\hat{\mathbf{x}}_i$ close to $\mathbf{x}_i$, which can be formulated as the following optimization problem 
\begin{equation}
\mathbf{x}_i^\star(\mathbf{w})\in\arg\min_{\hat{\mathbf{x}}_i\in\mathbb{R}^n}
\bigl(\mathbf{w}^{T} \hat{\mathbf{x}}_i - z_i\bigr)^2+\gamma\|\hat{\mathbf{x}}_i-\mathbf{x}_i\|_2^2,~ i\in[m].
\end{equation}
where $\gamma>0$ denotes the regularization weight. Meanwhile, given the manipulated features $\{\tilde{\mathbf{x}}_i\}_{i=1}^m$, the learner fits a linear predictor by minimizing the squared prediction error as follows: 
\begin{align}
\mathbf{w}^\star = \arg\min_{\mathbf{w}\in\mathbb{R}^n}
\sum_{i=1}^{m}\left(\mathbf{w}^{T} \tilde{\mathbf{x}}_i - y_i\right)^2 .
\end{align}
Let $\mathbf{X}\in\mathbb{R}^{m\times n}$ stack rows $\mathbf{x}_i^{T}$, and let $\mathbf{y},\mathbf{z}\in\mathbb{R}^m$ stack $\{y_i\},\{z_i\}$ and denote by $\mathbf{X}^\star(\mathbf{w})\in\mathbb{R}^{m\times n}$ the matrix stacking $(\mathbf{x}_i^\star(\mathbf{w}))^{T}$. The Stackelberg equilibrium of the two players can be formulated as the following bilevel optimization problem:
\begin{subequations} \label{eq:spg_ls_bilevel}
\begin{align}
&\min_{\mathbf{w}\in\mathbb{R}^n}\ ~\|\mathbf{X}^{\star}\mathbf{w}-\mathbf{y}\|_2^2 \\
&\text{s.t.} ~
\mathbf{X}^\star\in\arg\min_{\hat{\mathbf{X}}\in\mathbb{R}^{m\times n}}
\|\hat{\mathbf{X}}\,\mathbf{w} - \mathbf{z}\|_2^2+\gamma\|\hat{\mathbf{X}}-\mathbf{X}\|_F^2.
\end{align}
\end{subequations}
Note that problem~\eqref{eq:spg_ls_bilevel} is a nonconvex bilevel program and is NP-hard in general~\cite{wang2019global}, even with quadratic losses and simple regularization. The difficulty arises from the coupling between the learner’s decision and the provider’s best response, which induces a nonlinear upper-level objective. We present reformulations that mitigate this issue and enable efficient algorithms for~\eqref{eq:spg_ls_bilevel} in the sequel.

\subsection{Quadratic Fractional Reformulation}
By exploiting a closed-form characterization of the follower’s best response via a Sherman--Morrison-type argument \cite{sherman1950adjustment}, the bilevel problem~\eqref{eq:spg_ls_bilevel} can be equivalently reduced to an unconstrained quadratic fractional program
\begin{equation} \label{eq:qfp_w}
\inf_{\mathbf{w}\in\mathbb{R}^n}\ 
\left\|
\frac{\tfrac{1}{\gamma}(\mathbf{w}^{T} \mathbf{w)\,\mathbf{z} + \mathbf{X}\mathbf{w}}}{1+\tfrac{1}{\gamma}(\mathbf{w}^{T} \mathbf{w)} }-\mathbf{y}
\right\|_2^2.
\end{equation}
Introduce the scalar auxiliary variable $\alpha \triangleq \mathbf{w}^T \mathbf{w}/{\gamma} \ge 0$, an equivalent constrained formulation of \eqref{eq:spg_ls_bilevel} is given
\begin{subequations}\label{eq:qfp_wa}
\begin{align}
\min_{\mathbf{w}\in\mathbb{R}^n,\ \alpha\ge 0} \quad&
v(\mathbf{w},\alpha)\triangleq
\left\|
\frac{\alpha \mathbf{z} + \mathbf{X}\mathbf{w}}{1+\alpha}-\mathbf{y}
\right\|_2^2 \\
\quad\text{s.t.}\quad &
\mathbf{w}^{T} \mathbf{w}=\gamma\alpha.
\end{align}
\end{subequations}

\subsection{SDP Reformulation}
Problem~\eqref{eq:qfp_wa} further admits an exact semidefinite programming (SDP) reformulation; see Theorem~3.3 in~\cite{wang2021fast}. In particular, it can be equivalently written as
\begin{equation}
\sup_{\mu,\lambda\in\mathbb{R}}\ \mu
\quad\text{s.t.}\quad
\mathbf{A}-\mu \mathbf{B}+\lambda \mathbf{C}\succeq \mathbf{0},
\label{eq:sdp_form}
\end{equation}
where $\mathbf{A},\mathbf{B},\mathbf{C}$ are symmetric matrices defined by
\begin{align*}
\mathbf{A} &=
\begin{pmatrix}
\mathbf{X}^T\mathbf{X} & \mathbf{X}^T(\mathbf{z}-\mathbf{y}) & -\mathbf{X}^T\mathbf{y}\\
(\mathbf{z}-\mathbf{y})^T\mathbf{X} & \|\mathbf{z}-\mathbf{y}\|^2 & -(\mathbf{z}-\mathbf{y})^T\mathbf{y}\\
-\mathbf{y}^T\mathbf{X} & -\mathbf{y}^T(\mathbf{z}-\mathbf{y}) & \mathbf{y}^T\mathbf{y}
\end{pmatrix}, \\
\mathbf{B} &=
\begin{pmatrix}
\mathbf{O}_n & \mathbf{0} & \mathbf{0}\\
\mathbf{0}^T & 1 & 1\\
\mathbf{0}^T & 1 & 1
\end{pmatrix}, 
\mathbf{C} =
\begin{pmatrix}
\frac{1}{\gamma}\mathbf{I}_n & \mathbf{0} & -\frac{1}{2}\mathbf{1}_n\\
\mathbf{0}^T & 0 & -\frac{1}{2}\\
-\frac{1}{2}\mathbf{1}_n^T & -\frac{1}{2} & 0
\end{pmatrix}. 
\end{align*}
Herein, $\mathbf{O}_n$ denotes the $n\times n$ all-zero matrix, $\mathbf{I}_n$ is the $n\times n$ identity matrix, and $\mathbf{1}_n$ is the all-ones vector in $\mathbb{R}^n$.

\subsection{SOCP reformulation}
By an invertible congruence transformation, the linear matrix inequality (LMI) in~\eqref{eq:sdp_form} can be converted into an equivalent SOCP~\cite{wang2021fast}. Specifically, there exists an invertible $\mathbf{V}$ such that the transformed matrices admit an \emph{arrow} form
\begin{equation*}
\widetilde{\mathbf{A}}=
\begin{pmatrix}
\mathbf{D} \!&\! \mathbf{b}\\
\mathbf{b}^{T} \!&\! c
\end{pmatrix},
\widetilde{\mathbf{B}}=
\begin{pmatrix}
\mathbf{O}_{n+1} \!&\! \mathbf{0}\\
\mathbf{0}^T \!&\! 4
\end{pmatrix},
\widetilde{\mathbf{C}}=
\begin{pmatrix}
\frac{1}{\gamma}\mathbf{I}_{n+1} \!&\! \mathbf{0}\\
\mathbf{0}^T \!&\! -1
\end{pmatrix},
\end{equation*}
where $\mathbf{D}=\mathrm{Diag}(d_1,\ldots,d_{n+1})$, $\mathbf{b}\in\mathbb{R}^{n+1}$, and $c\in\mathbb{R}$. A generalized Schur-complement argument then yields the SOCP~\cite{wang2021fast}
\begin{subequations} \label{socp_reform}
\begin{align}
\sup_{\mu,\lambda,\{s_i\}}\ \ & \mu \label{socp_reform_obj}\\
\text{s.t. }\ \ &
d_i+\frac{\lambda}{\gamma}\ge 0,\ i=1,\ldots,n+1, \label{socp_reform_c1}\\
& c-4\mu-\lambda-\sum_{i=1}^{n+1}s_i\ge 0, \label{socp_reform_c2}\\
& s_i\!\left(d_i+\frac{\lambda}{\gamma}\right)\ge b_i^2,\ \ s_i\ge 0,\ i\in [n+1]. \label{socp_reform_c3}
\end{align}
\end{subequations}
While \eqref{socp_reform_c1}--\eqref{socp_reform_c3} is polynomial-time solvable by interior-point methods, forming the SOCP typically requires a spectral decomposition of a matrix of order $(n+1)\times(n+1)$, which can be computationally intensive for high-dimensional feature spaces.

\subsection{SCLS Reformulation}\label{subsec:scls}

We start from the quadratic fractional program reformulation of SPG-LS in \eqref{eq:qfp_wa}.
Following~\cite{wang2022solving}, a nonlinear change of variables transforms~\eqref{eq:qfp_wa} into a spherically constrained least-squares (SCLS) problem:
\begin{subequations}
\label{eq:scls}
\begin{align}
\min_{\tilde{\mathbf{w}}\in\mathbb{R}^n,\ \tilde{\alpha}\in\mathbb{R}} \ \ 
& \tilde v(\tilde{\mathbf{w}},\tilde{\alpha})\triangleq
\left\|
\frac{\tilde{\alpha}}{2}\mathbf{z}
\!+\!\frac{\sqrt{\gamma}}{2}\mathbf{X}\tilde{\mathbf{w}}
\!-\!\Bigl(\mathbf{y}\!-\!\frac{\mathbf{z}}{2}\Bigr)
\right\|_2^2 \label{eq:scls_obj}\\
\text{s.t.}\ \ 
& \tilde{\mathbf{w}}^{T}\tilde{\mathbf{w}}+\tilde{\alpha}^2=1. \label{eq:scls_sphere}
\end{align}
\end{subequations}

Note that the two formulations~\eqref{eq:qfp_wa} and~\eqref{eq:scls} are equivalent under an explicit change of variables \cite{wang2022solving}.
Given any feasible $(\mathbf{w},\alpha)$ of~\eqref{eq:qfp_wa}, define
\begin{equation}
\label{eq:map_forward}
\tilde{\mathbf{w}} \ :=\ \frac{2}{\sqrt{\gamma}(1+\alpha)}\,\mathbf{w},
\quad
\tilde{\alpha}\ :=\ \frac{\alpha-1}{\alpha+1}.
\end{equation}
Then $(\tilde{\mathbf{w}},\tilde{\alpha})$ is feasible for~\eqref{eq:scls} and satisfies
$\tilde v(\tilde{\mathbf{w}},\tilde{\alpha})=v(\mathbf{w},\alpha)$.
Conversely, for any feasible $(\tilde{\mathbf{w}},\tilde{\alpha})$ of~\eqref{eq:scls} with $\tilde{\alpha}\neq 1$, define
\begin{equation}
\label{eq:map_backward}
\mathbf{w} \! :=\! \frac{\sqrt{\gamma}}{1-\tilde{\alpha}}\,\tilde{\mathbf{w}},
\quad
\alpha \! :=\! \frac{1+\tilde{\alpha}}{1-\tilde{\alpha}}.
\end{equation}
Then $(\mathbf{w},\alpha)$ is feasible for~\eqref{eq:qfp_wa} and satisfies
$v(\mathbf{w},\alpha)=\tilde v(\tilde{\mathbf{w}},\tilde{\alpha})$.
To obtain a compact representation, we can define
\begin{equation*}
\hat{\mathbf{L}} \ :=\ \begin{pmatrix}\frac{\sqrt{\gamma}}{2}\mathbf{X} & \frac{1}{2}\mathbf{z}\end{pmatrix}\in\mathbb{R}^{m\times(n+1)},
~
\mathbf{r} \ :=\ \begin{pmatrix}\tilde{\mathbf{w}}\\ \tilde{\alpha}\end{pmatrix}\in\mathbb{R}^{n+1}.
\end{equation*}
With these definitions,~\eqref{eq:scls} can be written equivalently as
\begin{subequations}\label{eq:scls_compact}
\begin{align}
\min_{\mathbf{r}\in\mathbb{R}^{n+1}}\ \ 
& f(\mathbf{r})\ \triangleq\ \|\hat{\mathbf{L}}\,\mathbf{r}-(\mathbf{y}-\mathbf{z}/2)\|_2^2 \label{eq:scls_compact_obj}\\
\text{s.t.}\ \ 
& \mathbf{r}^{T}\mathbf{r}=1. \label{eq:scls_compact_sphere}
\end{align}
\end{subequations}
Moreover, the objective admits the quadratic expansion
\begin{align} \label{eq:quad_form}
f(\mathbf{r}) &= \mathbf{r}^{T}\mathbf{H}\mathbf{r} + 2\,\mathbf{g}^{T}\mathbf{r} + p,
\end{align}
where $\mathbf{H} = \hat{\mathbf{L}}^{T}\hat{\mathbf{L}}$, $\mathbf{g}= \hat{\mathbf{L}}^{T}(\mathbf{z}/2-\mathbf{y})$, and $p = (\mathbf{z}/2-\mathbf{y})^{T}(\mathbf{z}/2-\mathbf{y})$.

The compact formulation~\eqref{eq:scls_compact} reveals that SPG-LS is equivalent to minimizing a least-squares objective over the unit sphere, which serves as the starting point for the ADMM algorithm developed in the next section.

\section{Global ADMM Solution}
\label{sec:algorithm}
We can first reformulate problem \eqref{eq:scls_compact} equivalently as 
\begin{equation} \label{problem_ADMM}
 \begin{aligned} 
   p^{\star}:~ \mathop{{\mathrm{minimize}}}\limits_{{\bf r}, {\bf s}} & {\quad} f({\bf r}) \\
    \textrm{subject to} & {\quad} {\bf s}-{\bf r} = {\bf 0}, ~{\bf s}^T{\bf s} = 1,
\end{aligned}  
\end{equation}
where ${\bf s}$ is the additional variable.
We introduce the following theorem to reveal the duality property of \eqref{problem_ADMM}. 

\begin{theorem} The function $f(\mathbf{r})$ is closed, smooth, bounded, and convex. Then, the strong duality for the non-convex problem \eqref{problem_ADMM} holds. 
\label{lem:duality2}
\end{theorem}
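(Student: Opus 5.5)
The plan is to reduce everything to the classical hidden convexity of the trust-region subproblem (TRS) with an equality constraint, and then lift its multiplier to the consensus form \eqref{problem_ADMM}.

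\textbf{Step 1: properties of $f$.} By \eqref{eq:quad_form}, $f(\mathbf{r})=\mathbf{r}^T\mathbf{H}\mathbf{r}+2\mathbf{g}^T\mathbf{r}+p$ with $\mathbf{H}=\hat{\mathbf{L}}^T\hat{\mathbf{L}}\succeq\mathbf{0}$. Hence $f$ is a convex, $C^\infty$ quadratic, and it is closed because it is continuous. It is bounded below by $0$, and bounded on the compact sphere. The sphere is compact and $f$ is continuous, so \eqref{eq:scls_compact} has a global minimizer $\mathbf{r}^\star$ and $p^\star$ is finite.

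\textbf{Step 2: global optimality conditions for the sphere problem.} I will invoke the standard TRS characterization, which follows from the S-lemma for a single quadratic equality constraint. A point $\mathbf{r}^\star$ with $\|\mathbf{r}^\star\|=1$ is globally optimal for \eqref{eq:scls_compact} if and only if there exists $\mu^\star\in\mathbb{R}$ such that
\[
(\mathbf{H}+\mu^\star\mathbf{I})\mathbf{r}^\star=-\mathbf{g},\qquad \mathbf{H}+\mu^\star\mathbf{I}\succeq\mathbf{0}.
\]
Sufficiency is a short computation: for every feasible $\mathbf{r}$,
\[
f(\mathbf{r})-f(\mathbf{r}^\star)=(\mathbf{r}-\mathbf{r}^\star)^T(\mathbf{H}+\mu^\star\mathbf{I})(\mathbf{r}-\mathbf{r}^\star)\ge 0,
\]
where the $\mu^\star$ terms cancel because $\|\mathbf{r}\|=\|\mathbf{r}^\star\|=1$. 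Necessity is the classical Moré–Sorensen / S-lemma result, which I will take as known.

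\textbf{Step 3: a saddle point of the consensus Lagrangian.} Take
\[
\mathcal{L}(\mathbf{r},\mathbf{s},\boldsymbol{\lambda},\mu)=f(\mathbf{r})+\boldsymbol{\lambda}^T(\mathbf{s}-\mathbf{r})+\mu(\mathbf{s}^T\mathbf{s}-1),
\]
and set $\boldsymbol{\lambda}^\star=-2\mu^\star\mathbf{r}^\star$.

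\emph{The $\mathbf{r}$-part.} The function $f(\mathbf{r})+2\mu^\star\mathbf{r}^{\star T}\mathbf{r}$ is convex because $\mathbf{H}\succeq\mathbf{0}$. Its gradient at $\mathbf{r}^\star$ is $2(\mathbf{H}\mathbf{r}^\star+\mathbf{g}+\mu^\star\mathbf{r}^\star)=\mathbf{0}$, so $\mathbf{r}^\star$ minimizes it.

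\emph{The $\mathbf{s}$-part.} The function $\mu^\star\|\mathbf{s}\|^2-2\mu^\star\mathbf{r}^{\star T}\mathbf{s}$ is minimized at $\mathbf{s}=\mathbf{r}^\star$ provided $\mu^\star\ge 0$. If $\mu^\star=0$ it vanishes identically.

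In either case $\min_{\mathbf{r},\mathbf{s}}\mathcal{L}(\cdot,\cdot,\boldsymbol{\lambda}^\star,\mu^\star)=\mathcal{L}(\mathbf{r}^\star,\mathbf{r}^\star,\boldsymbol{\lambda}^\star,\mu^\star)=f(\mathbf{r}^\star)=p^\star$. Weak duality gives the reverse inequality, so the dual optimum equals $p^\star$: strong duality holds and $(\mathbf{r}^\star,\mathbf{r}^\star,\boldsymbol{\lambda}^\star,\mu^\star)$ is a saddle point.

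\textbf{Main obstacle: the case $\mu^\star<0$.} This case can occur when the unconstrained minimizer of the convex $f$ lies strictly inside the ball. The plain Lagrangian is then unbounded below in $\mathbf{s}$, and the above construction fails.

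My first attempt is to exploit that $f$ is convex and that all the curvature lives in $\mathbf{r}$. I would add the term $\tfrac{\rho}{2}\|\mathbf{s}-\mathbf{r}\|^2$, which vanishes on the feasible set, so the primal problem is unchanged; this matches the augmented Lagrangian used by ADMM. With $\boldsymbol{\lambda}^\star=-2\mu^\star\mathbf{r}^\star$ as before, I would check that for $\rho$ large enough the joint Hessian in $(\mathbf{r},\mathbf{s})$,
\[
\begin{pmatrix}2\mathbf{H}+\rho\mathbf{I} & -\rho\mathbf{I}\\ -\rho\mathbf{I} & (2\mu^\star+\rho)\mathbf{I}\end{pmatrix},
\]
is positive semidefinite. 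By a Schur complement, this reduces to requiring $2\mathbf{H}+\rho\mathbf{I}-\rho^2(2\mu^\star+\rho)^{-1}\mathbf{I}\succeq\mathbf{0}$. That condition follows from $\mathbf{H}+\mu^\star\mathbf{I}\succeq\mathbf{0}$ whenever $\rho>-2\mu^\star$.

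If this works, the stationary point $(\mathbf{r}^\star,\mathbf{r}^\star)$ is again a global minimizer of the (augmented) Lagrangian, and the zero-gap argument of Step 3 goes through for every sign of $\mu^\star$. This is the step I expect to need the most care.
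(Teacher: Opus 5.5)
\textbf{The gap: the case $\mu^\star<0$, where the theorem is false.} Your Steps 1--3 are correct. For $\mu^\star\ge 0$ the pair $(\boldsymbol\lambda^\star,\mu^\star)$ gives a genuine saddle point, which is more than the paper's appendix establishes. The appendix treats the stationary point of $\mathcal{L}_0$ as its minimizer, which is valid only for $\nu>0$ and $\mathbf H\succ\mathbf 0$. It then simply posits a dual pair whose Lagrangian minimizer is feasible, leaving the sign in $\nu^\star=\pm\|\mathbf u^\star\|/2$ unresolved.

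The case $\mu^\star<0$ cannot be closed, because strong duality for the ordinary Lagrangian of \eqref{problem_ADMM} fails there. For $\nu<0$ you have $\inf_{\mathbf s}\{\nu\|\mathbf s\|^2+\mathbf u^T\mathbf s\}=-\infty$. For $\nu>0$, $\sup_{\nu>0}\{-\|\mathbf u\|^2/(4\nu)-\nu\}=-\|\mathbf u\|$. So the dual reduces to $\sup_{\mathbf u}\{\inf_{\mathbf r}[f(\mathbf r)-\mathbf u^T\mathbf r]-\|\mathbf u\|\}$, which is the Fenchel dual of the convex problem $\min_{\|\mathbf r\|\le 1}f(\mathbf r)$. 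In words: splitting off $\mathbf s$, which carries no curvature, forces $\nu\ge 0$ and convexifies the sphere into the ball.

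A concrete counterexample is $f(\mathbf r)=\|\mathbf r-\mathbf c\|^2$ with $\|\mathbf c\|<1$. This $f$ is closed, smooth, convex and bounded below. It is realized by SCLS data in which $\hat{\mathbf L}$ has orthonormal columns and $\mathbf y=\mathbf z/2+\hat{\mathbf L}\mathbf c$. Here $p^\star=(1-\|\mathbf c\|)^2$, but $d^\star=\sup_{\mathbf u}\{-\mathbf u^T\mathbf c-\|\mathbf u\|^2/4-\|\mathbf u\|\}=0$.

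Combined with uniqueness of the TRS multiplier, this shows exactly when strong duality for \eqref{problem_ADMM} holds: precisely when $\mu^\star\ge 0$. It fails precisely when $\mathbf H\succ\mathbf 0$ and $\|\mathbf H^{-1}\mathbf g\|<1$. The correct result is the conditional statement your Step 3 already proves.

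\textbf{Why the augmented-Lagrangian repair does not help.} Adding $\tfrac{\rho}{2}\|\mathbf s-\mathbf r\|^2$ replaces the Lagrangian dual by an augmented-Lagrangian dual. A zero gap there would be a different theorem from the one stated, and from the $\mathcal{L}_0$-duality you use in Step 3.

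Moreover, the Schur-complement step is wrong. Since $\rho-\rho^2/(2\mu^\star+\rho)=2\mu^\star\rho/(2\mu^\star+\rho)$, you need $\mathbf H+\tfrac{\mu^\star\rho}{2\mu^\star+\rho}\mathbf I\succeq\mathbf 0$. For $\mu^\star<0$, $\tfrac{\mu^\star\rho}{2\mu^\star+\rho}=\mu^\star-\tfrac{2(\mu^\star)^2}{2\mu^\star+\rho}<\mu^\star$. So the condition is strictly stronger than $\mathbf H+\mu^\star\mathbf I\succeq\mathbf 0$, not implied by it. It holds only if $\lambda_{\min}(\mathbf H)>-\mu^\star$ and $\rho\ge 2\lambda_{\min}(\mathbf H)|\mu^\star|/(\lambda_{\min}(\mathbf H)-|\mu^\star|)$. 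It fails for every $\rho$ in the hard case $\mu^\star=-\lambda_{\min}(\mathbf H)$.

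Take $\mathbf c=\mathbf 0$ in the example above, so $\mathbf H=\mathbf I$, $\mathbf g=\mathbf 0$ and $\mu^\star=-1$. Rotation invariance and concavity of the dual function reduce the augmented dual to $\boldsymbol\lambda=\mathbf 0$. Its value there is $\sup\{-\mu:\mu\ge-\rho/(\rho+2)\}=\rho/(\rho+2)<1=p^\star$ for every $\rho>0$. So a gap persists even in the augmented sense.
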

\begin{proof}
See Appendix \ref{proof:psd}.
\end{proof}

\begin{remark}
For non-convex optimization problems, the KKT conditions are generally only necessary for global optimality. An important exception arises for quadratic objectives with spherical constraints (e.g., trust-region-type problems), where the associated dual problem is convex and strong duality holds (i.e., the duality gap is zero)~\cite{Boyd_Vandenberghe_2004}. In this case, the Lagrangian admits a saddle point, and any primal-dual pair satisfying the KKT conditions corresponds to a globally optimal solution of~\eqref{problem_ADMM}.
\end{remark}

\subsection{Standard ADMM}
To improve the convergence properties, ADMM generally updates the variables based on the augmented Lagrangian function \cite{wei2026quadratic}.
The augmented Lagrangian of problem \eqref{problem_ADMM} is 
 \begin{flalign}  \label{aug._LG}
  \mathcal{L}_1({\bf r}, {\bf s}, {\bf u}) \!=\! f({\bf r})+\!{\bf u}^T({\bf s}\!-\!{\bf r})\!+\!\frac{\rho}{2}\|{\bf s}\!-\!{\bf r}\|_2^2,
\end{flalign} 
where ${\bf s}^T{\bf s}=1$ denotes the constraint for variable ${\bf s}$, ${\bf u}$ denotes the dual variable, and $\rho>0$ is the penalty parameter. 
Define ${\bf v}=\frac{1}{\rho}{{\bf u}}$, we have  ${\bf u}^T({\bf s}-{\bf r})+\frac{\rho}{2}\|{\bf s}-{\bf r}\|_2^2 
  = \frac{\rho}{2}\|{\bf s}-{\bf r}+{\bf v}\|_2^2-\frac{\rho}{2}\|{\bf v}\|_2^2$.  
Then, we can rewrite the scaled augmented Lagrangian \eqref{aug._LG} as 
 \begin{align}  \label{aug._LG_R}
  \mathcal{L}_2({\bf r}, {\bf s}, {\bf v}) \!= \!f({\bf r})\!+\!\frac{\rho}{2}\|{\bf s}\!-\!{\bf r}\!+\!{\bf v}\|_2^2\!-\!\frac{\rho}{2}\|{\bf v}\|_2^2.
\end{align}
Within the ADMM framework, each iteration involves minimizing the augmented Lagrangian $\mathcal{L}_2({\bf r}, {\bf s}, {\bf v})$ with respect to the primal variables, followed by an update of the dual variable using the dual ascent method \cite{boyd2011distributed}. Specifically, ADMM updates the primal and dual variables sequentially as follows:
\begin{subequations} \label{ADMM_update}
 \begin{align} 
  {\bf r}^{k+1} =&  \mathop{\mathrm{argmin}}\limits_{\bf r}f({\bf r})\!+\frac{\rho}{2}\|{\bf s}^{k}-{\bf r}+{\bf v}^{k}\|_2^2; \label{ADMM_update1}\\
  {\bf s}^{k+1}  =& \mathop{\mathrm{argmin}}\limits_{{\bf s}}\frac{\rho}{2}\|{\bf s}-{\bf r}^{k}+{\bf v}^{k}\|_2^2, ~{\rm{s.t.}}~{\bf s}^T{\bf s}=1;\label{ADMM_update2}\\
  {\bf v}^{k+1} = & \mathop{\mathrm{argmax}}\limits_{\bf v} \frac{\rho}{2}\|{\bf s}^{k}-{\bf r}^{k}+{\bf v}\|_2^2-\frac{\rho}{2}\|{\bf v}\|_2^2. \label{ADMM_update3}
 \end{align}   
\end{subequations}
Subsequently, the solutions for the subproblems in \eqref{ADMM_update} are, respectively, given by 
\begin{subequations} \label{ADMM_solution}
 \begin{align} 
  {\bf r}^{k+1} &= 
\left(\mathbf{H}+
\frac{\rho}{2}\mathbf{I}
\right)^{-1}
\left(-\mathbf{g}+
\frac{\rho}{2}\left(\mathbf{s}^{k}+\mathbf{v}^{k}\right)
\right);
\label{ADMM_solution1}\\
  {\bf s}^{k+1} &= \frac{1}{\|{\bf r}^{k+1}-{\bf v}^{k}\|_2}({\bf r}^{k+1}-{\bf v}^{k}); \label{ADMM_solution2}\\
  {\bf v}^{k+1} &=  {\bf v}^{k}+{\bf s}^{k+1}-{\bf r}^{k+1}. \label{ADMM_solution3}
 \end{align}   
\end{subequations}
Note that a set of closed-form solutions is achieved for the subproblems in \eqref{ADMM_solution1} and \eqref{ADMM_solution2} which highlights the efficiency of the proposed method. Meanwhile, the major complexity of ADMM arises from the matrix inversion in \eqref{ADMM_solution1}. However, it remains unchanged during the iteration procedure. Hence, we can pre-compute it before ADMM iteration.   
\vspace{-0.2cm}
\begin{proposition}\label{prop2}
Let $\{(\mathbf{r}^k,\mathbf{s}^k,\mathbf{v}^k)\}_{k\ge 0}$ be generated by ADMM in~\eqref{ADMM_solution} for problem~\eqref{problem_ADMM} with $\rho>0$. Then:
(i) the primal and dual residuals converge to zero;
(ii) the objective values $\{f(\mathbf{r}^k)\}$ converge;
(iii) the dual iterates $\{\mathbf{v}^k\}$ converge (and $\|\mathbf{v}^{k+1}-\mathbf{v}^k\|\to 0$);
(iv) every limit point is a KKT point of~\eqref{problem_ADMM} (hence a stationary point).
\end{proposition}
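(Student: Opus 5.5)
The plan is to use the scaled augmented Lagrangian $\mathcal{L}_2$ in \eqref{aug._LG_R} as a Lyapunov function. I will show it is bounded below along the iterates and changes by a summable amount. Residual convergence then follows, and the KKT property of limit points comes from passing to the limit in the closed-form optimality conditions \eqref{ADMM_solution}.

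\emph{Step 1 (boundedness).} Every $\mathbf{s}^k$ lies on the unit sphere. The $\mathbf{r}$-update \eqref{ADMM_solution1} is a fixed affine map of $\mathbf{s}^k+\mathbf{v}^k$, with $\|(\mathbf{H}+\frac{\rho}{2}\mathbf{I})^{-1}\|\le 2/\rho$. So $\mathbf{r}^k$ is bounded as soon as $\mathbf{v}^k$ is. Since $f\ge 0$ by \eqref{eq:scls_compact_obj}, we get
\[
\mathcal{L}_2 \ge -\frac{\rho}{2}\|\mathbf{v}\|_2^2,
\]
so this lower bound also reduces to controlling $\mathbf{v}^k$.

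\emph{Step 2 (sufficient decrease).} I will bound the three sub-steps separately.
\begin{itemize}
\item The $\mathbf{r}$-step minimizes a function that is strongly convex with modulus at least $\rho$, because $f$ is convex by Theorem~\ref{lem:duality2}. Hence it decreases $\mathcal{L}_2$ by at least $\frac{\rho}{2}\|\mathbf{r}^{k+1}-\mathbf{r}^k\|_2^2$.
\item The $\mathbf{s}$-step is an exact projection onto the sphere, so it does not increase $\mathcal{L}_2$.
\item The dual step \eqref{ADMM_solution3} increases $\mathcal{L}_2$ by a multiple of $\rho\|\mathbf{v}^{k+1}-\mathbf{v}^k\|_2^2$, which equals a multiple of $\rho\|\mathbf{s}^{k+1}-\mathbf{r}^{k+1}\|_2^2$.
\end{itemize}
To close the estimate, I will use the first-order condition of the $\mathbf{r}$-step,
\[
\nabla f(\mathbf{r}^{k+1}) = \rho\,(\mathbf{s}^k+\mathbf{v}^k-\mathbf{r}^{k+1}),
\]
written at consecutive $k$. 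Combined with the $2\|\mathbf{H}\|$-Lipschitz gradient of $f$, this expresses $\mathbf{v}^{k+1}-\mathbf{v}^k$ through $\mathbf{r}^{k+1}-\mathbf{r}^k$ and $\mathbf{s}^{k+1}-\mathbf{s}^k$. Telescoping then gives
\[
\sum_k \bigl(\|\mathbf{r}^{k+1}-\mathbf{r}^k\|^2+\|\mathbf{v}^{k+1}-\mathbf{v}^k\|^2\bigr)<\infty .
\]
This yields (i): the primal residual $\mathbf{s}^k-\mathbf{r}^k=\mathbf{v}^k-\mathbf{v}^{k-1}$ and the dual residual $\rho(\mathbf{s}^{k+1}-\mathbf{s}^k)$ both tend to $0$. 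It also yields $\|\mathbf{v}^{k+1}-\mathbf{v}^k\|\to 0$ in (iii). Monotone convergence of the bounded-below $\mathcal{L}_2$, together with vanishing residuals, gives (ii) because
\[
f(\mathbf{r}^k)=\mathcal{L}_2-\frac{\rho}{2}\|\mathbf{s}^k-\mathbf{r}^k+\mathbf{v}^k\|^2+\frac{\rho}{2}\|\mathbf{v}^k\|^2 .
\]
Convergence of $\{\mathbf{v}^k\}$ itself should follow from the identity $\rho\mathbf{v}^k\approx\nabla f(\mathbf{r}^k)$ and convergence of $\{\mathbf{r}^k\}$ along the residual-vanishing sequence.

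\emph{Step 3 (KKT at limit points).} Take a convergent subsequence with limit $(\bar{\mathbf{r}},\bar{\mathbf{s}},\bar{\mathbf{v}})$. Vanishing residuals give $\bar{\mathbf{s}}=\bar{\mathbf{r}}$ and $\bar{\mathbf{s}}^T\bar{\mathbf{s}}=1$. The $\mathbf{r}$-condition passes to $\nabla f(\bar{\mathbf{r}})=\rho\bar{\mathbf{v}}$. The projection \eqref{ADMM_solution2} gives $\bar{\mathbf{s}}\parallel \bar{\mathbf{r}}-\bar{\mathbf{v}}$, i.e.\ $\bar{\mathbf{v}}=\mu\bar{\mathbf{s}}$ for a scalar multiplier $\mu$. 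Together these are the KKT system $\nabla f(\bar{\mathbf{r}})=2\lambda\bar{\mathbf{r}}$, $\|\bar{\mathbf{r}}\|=1$ of \eqref{problem_ADMM}, which proves (iv). Theorem~\ref{lem:duality2} is then what the paper later uses to upgrade this to global optimality.

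\emph{Main obstacle.} The decisive difficulty is Step 2 for an arbitrary $\rho>0$. The dual-increase term must be dominated by the primal decreases. The natural bound has the form $\|\mathbf{v}^{k+1}-\mathbf{v}^k\|\lesssim (\|\mathbf{H}\|/\rho)\|\mathbf{r}^{k+1}-\mathbf{r}^k\|$, and it only closes when $\rho$ is large relative to $\|\mathbf{H}\|$. Two complications make this worse: the $\mathbf{r}$-step uses the stale $\mathbf{s}^k$, and \eqref{ADMM_solution2} uses $\mathbf{r}^{k+1}-\mathbf{v}^k$ rather than $\mathbf{r}^{k+1}+\mathbf{v}^k$.

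For general $\rho$ I would first try to exploit the compactness of the sphere and boundedness of the affine $\mathbf{r}$-map to show $\{\mathbf{v}^k\}$ is bounded. I would then argue directly that $\sum_k\|\mathbf{s}^{k+1}-\mathbf{r}^{k+1}\|^2<\infty$. If this fails, the statement as worded for every $\rho>0$ needs an extra regularity hypothesis, either bounded dual iterates or summable dual increments, which the contributions section already alludes to. I would invoke such a hypothesis explicitly at this step rather than claim it unconditionally.
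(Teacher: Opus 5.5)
\paragraph{The gap: the dual-step increase is never controlled.}
Your outline is the paper's own route. It uses the same three-step accounting of $\mathcal{L}_2$: a strongly convex $\mathbf{r}$-step, a non-increasing sphere projection, and an increasing dual step. It then telescopes and passes to the limit in the optimality conditions. Your Step~3 is the paper's Step~7 and is fine once (i) and boundedness of the iterates are available.

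The genuine gap is the one you flag yourself. Nothing in the proposal shows that the dual-step increase, which is exactly $\rho\|\mathbf{s}^{k+1}-\mathbf{r}^{k+1}\|_2^2$, is dominated by the primal decreases. Nothing shows that $\{\mathbf{v}^k\}$ is bounded either, which is what would make $\mathcal{L}_2\ge-\frac{\rho}{2}\|\mathbf{v}^k\|_2^2$ a usable lower bound. So ``telescoping then gives $\sum_k(\cdots)<\infty$'' is unsupported, and (i)--(iv) all rest on it.

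Your large-$\rho$ fallback also fails in this update order. Combining the $\mathbf{r}$-optimality condition with the dual update gives the exact identity
\[
\rho\,\mathbf{v}^{k+1}=\nabla f(\mathbf{r}^{k+1})+\rho\,(\mathbf{s}^{k+1}-\mathbf{s}^{k}).
\]
Hence $\mathbf{v}^{k+1}-\mathbf{v}^k$ equals $\frac{1}{\rho}\bigl(\nabla f(\mathbf{r}^{k+1})-\nabla f(\mathbf{r}^{k})\bigr)$ plus the second difference $(\mathbf{s}^{k+1}-\mathbf{s}^k)-(\mathbf{s}^k-\mathbf{s}^{k-1})$, the latter with coefficient one. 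The $\|\mathbf{H}\|/\rho$ mechanism controls only the first part.

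The second part must be absorbed by the $\mathbf{s}$-step. Its exact decrease on the sphere is $\frac{\rho}{2}\|\mathbf{r}^{k+1}-\mathbf{v}^k\|_2\,\|\mathbf{s}^{k+1}-\mathbf{s}^k\|_2^2$. Near a KKT point with $\nabla f(\bar{\mathbf{r}})=2\lambda\bar{\mathbf{r}}$, the factor $\|\mathbf{r}^{k+1}-\mathbf{v}^k\|_2$ tends to $1-2\lambda/\rho\approx 1$. The competing part of the increase is $\rho\|(\mathbf{s}^{k+1}-\mathbf{s}^k)-(\mathbf{s}^k-\mathbf{s}^{k-1})\|_2^2$. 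Both terms scale linearly in $\rho$, so enlarging $\rho$ does not tip the balance.

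What does work is the reverse ordering, with the sphere block first and the smooth block last. Then $\rho\mathbf{v}^{k+1}=\nabla f(\mathbf{r}^{k+1})$ holds exactly, and the Wang--Yin--Zeng nonconvex ADMM argument closes for $\rho$ above a multiple of $\|\mathbf{H}\|$. That, however, is a different algorithm from the one in the statement.

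Two smaller points. First, the sign in $\mathbf{r}^{k+1}-\mathbf{v}^k$ is not a complication. It is forced by the convention $\mathbf{u}^T(\mathbf{s}-\mathbf{r})$ and is consistent across all three updates. Second, convergence of the whole sequence $\{\mathbf{v}^k\}$ in (iii) does not follow from square-summable increments. It needs a Kurdyka--{\L}ojasiewicz argument on top of sufficient descent and boundedness. That argument is available here, since $f$ is polynomial and the sphere is semialgebraic.

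\paragraph{The paper's proof does not fill this gap.}
Its Steps~2--4 are your three bullets, and its Step~4 understates the dual increase as $\frac{\rho}{2}\|\mathbf{d}^{k+1}\|_2^2$. Its Step~5 then claims descent of $\Phi^k=\mathcal{L}_2(\mathbf{r}^k,\mathbf{s}^k,\mathbf{v}^k)+\frac{\rho}{2}\|\mathbf{s}^k-\mathbf{s}^{k-1}\|_2^2$ in two illegitimate ways:
\begin{itemize}
\item it invokes $\mathbf{s}^{k+1}-\mathbf{s}^k\to 0$, which is part of what is to be proved;
\item it places the dual-step term, which Step~4 records as an \emph{increase}, on the decrease side.
\end{itemize}
Its lower bound for $\Phi^k$ uses only that $f$ is bounded below on the sphere, ignoring the $-\frac{\rho}{2}\|\mathbf{v}^k\|_2^2$ term. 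Its Step~6 yields only $\|\mathbf{v}^{k+1}-\mathbf{v}^k\|_2\to 0$, not convergence of $\{\mathbf{v}^k\}$.

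So the obstacle you isolated is exactly where the paper's argument breaks. Your choice to state an explicit extra hypothesis, rather than claim the result for every $\rho>0$, is the sound one. As it stands, neither your proposal nor the paper's proof establishes the proposition as stated.
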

\begin{proof}
See Appendix~\ref{proof:pgc}.
\end{proof}

\vspace{-0.2cm}
\begin{remark}
Theorem~\ref{lem:duality2} establishes strong duality for~\eqref{problem_ADMM}, while Proposition~\ref{prop2} shows that the proposed ADMM iterates converge to a primal--dual stationary point. Together, these results imply that any limit point satisfies the KKT conditions of~\eqref{problem_ADMM} and is therefore globally optimal. This aligns with existing global convergence guarantees for quadratic programs with spherical constraints, provided that the penalty parameter $\rho$ is chosen sufficiently large to ensure a sufficient descent property of the augmented Lagrangian; see~\cite{wang2019global}. In practice, however, overly large $\rho$ can slow down ADMM. Developing principled and efficient rules for selecting $\rho$ to further speed up ADMM in this setting remains an interesting direction for future work.
\end{remark}

\begin{algorithm}[t]
  \caption{Global ADMM Algorithm to Solve \eqref{problem_ADMM}}
  \label{alg:ADMM}
  \begin{algorithmic}[1]
    \STATE \textbf{Input:} Penalty parameter $\rho>0$, tolerance $\epsilon>0$, maximum iterations $K_{\max}$,
    initial $\mathbf{s}^{(0)}$ with $\|\mathbf{s}^{(0)}\|_2=1$, and initial scaled dual variable $\mathbf{v}^{(0)}$.
    \STATE \textbf{Output:} $\mathbf{r}^{\star}=\mathbf{r}^{(k)}$, $\mathbf{s}^{\star}=\mathbf{s}^{(k)}$, and $\mathbf{v}^{\star}=\mathbf{v}^{(k)}$.
    \STATE Pre-compute $\mathbf{E}\triangleq (\mathbf{H}+\frac{\rho}{2}\mathbf{I})^{-1}$.
    \STATE Set $k=0$;
    \REPEAT
      \STATE \textbf{Update $\mathbf{r}^{(k+1)}$} by \eqref{ADMM_solution1}:
      \STATE \hspace{1.6em} $\mathbf{r}^{(k+1)} \leftarrow \mathbf{E}\!\left(-\mathbf{g}+\frac{\rho}{2}\left(\mathbf{s}^{(k)}+\mathbf{v}^{(k)}\right)\right)$;
      \STATE \textbf{Update $\mathbf{s}^{(k+1)}$} by \eqref{ADMM_solution2} (projection onto the unit sphere):
      \STATE \hspace{1.6em} $\mathbf{s}^{(k+1)} \leftarrow \dfrac{\mathbf{r}^{(k+1)}-\mathbf{v}^{(k)}}{\left\|\mathbf{r}^{(k+1)}-\mathbf{v}^{(k)}\right\|_2}$;
      \STATE \textbf{Update $\mathbf{v}^{(k+1)}$} by \eqref{ADMM_solution3}:
      \STATE \hspace{1.6em} $\mathbf{v}^{(k+1)} \leftarrow \mathbf{v}^{(k)}+\mathbf{s}^{(k+1)}-\mathbf{r}^{(k+1)}$;
      \STATE Compute primal residual:
      \STATE \hspace{1.6em} $r_{\mathrm{pri}}^{(k+1)} \leftarrow \left\|\mathbf{s}^{(k+1)}-\mathbf{r}^{(k+1)}\right\|_2$;
      \STATE Compute dual residual:
      \STATE \hspace{1.6em} $r_{\mathrm{dual}}^{(k+1)} \leftarrow \rho\left\|\mathbf{s}^{(k+1)}-\mathbf{s}^{(k)}\right\|_2$;
      \STATE $k \leftarrow k+1$;
    \UNTIL{$\max\{r_{\mathrm{pri}}^{(k)},\,r_{\mathrm{dual}}^{(k)}\}\le \epsilon$ or $k\ge K_{\max}$}
    \STATE \textbf{return} $\mathbf{r}^{(k)}$, $\mathbf{s}^{(k)}$, and $\mathbf{v}^{(k)}$;
  \end{algorithmic}
\end{algorithm}

We summarize the algorithm for solving SCLS in \eqref{problem_ADMM} via the standard ADMM in Algorithm~\ref{alg:ADMM}.

\subsection{Low-Complexity ADMM via Cholesky Decomposition}
To further reduce the computational cost of the $\mathbf{r}$-update, we avoid explicitly forming the inverse of the semidefinite matrix in \eqref{ADMM_solution1} and instead solve a linear system via Cholesky factorization.

To further reduce computational complexity, we can use the Cholesky decomposition to decompose the positive definite matrix. Note that 
Specifically, we can implement Cholesky decomposition  
\begin{align} \label{LU_decom}
   {\mathbf F} = \mathbf{H}+\frac{\rho}{2}\mathbf{I} = \mathbf{U}^T\mathbf{U}
\end{align}
where $\mathbf{U}$ is the upper-triangular matrix. Based on the update of ${\bf r}$ in \eqref{ADMM_solution1}, we have 
\begin{align} \label{LU_ADMM_solution}
\mathbf{U}^T\mathbf{U}{\bf r}^{k+1} &= \left(-\mathbf{g}+\frac{\rho}{2}\left(\mathbf{s}^{k}+\mathbf{v}^{k}\right)\right).
\end{align}
Then, we can define an intermediary variable $\mathbf{y}^{k+1}=\mathbf{U}{\bf r}^{k+1}$, and then we have
\begin{align} \label{LU_ADMM_update}
\mathbf{U}^T\mathbf{y}^{k+1} &= \left(-\mathbf{g}+\frac{\rho}{2}\left(\mathbf{s}^{k}+\mathbf{v}^{k}\right)\right),
\end{align}
and 
\begin{align} \label{LU_ADMM_update2}
\mathbf{U}{\bf r}^{k+1}=\mathbf{y}^{k+1}.
\end{align}
To obtain the update of ${\bf r}^{k+1}$, we can solve the triangular system of linear equations in \eqref{LU_ADMM_update} and then \eqref{LU_ADMM_update2} with the much lower complexity.

\begin{remark}[Cholesky versus explicit matrix inversion]
For the symmetric positive definite (SPD) linear system, $\mathbf{F}\mathbf{x}=\mathbf{b}$,
we avoid explicitly forming $(\mathbf{H}+\frac{\rho}{2}\mathbf{I})^{-1}$ and instead compute a Cholesky factorization
$\mathbf{F}=\mathbf{U}^{T}\mathbf{U}$, followed by two triangular solves.
For a dense $n+1$-dimension matrix, Cholesky decomposition costs about $\tfrac{1}{3}(n+1)^3$ flops and each solve costs
$\mathcal{O}((n+1)^2)$ operations.
In contrast, the matrix inverse requires $\mathcal{O}((n+1)^3)$ cost on top of the Cholesky factorization,
making it typically at least $\sim 3$ times more expensive in leading-order work and numerically less stable \cite{nicholas2009cholesky}. 
\end{remark}

\begin{remark}[Decomposition-based implementation]
The matrix $\mathbf{H}+\frac{\rho}{2}\mathbf{I}$ is constant when $\rho$ is fixed.
Hence, we can factorize it only once and reuse the factorization in all ADMM iterations.
In particular, we employ a cached Cholesky-based solver, so that each $\bf r$-update amounts to
two triangular solves (with a possible fill-reducing permutation in the sparse case), rather
than explicitly forming matrix inverse.
In MATLAB, this is implemented via
\begin{align}
&{\mathbf F}=\texttt{decomposition}(\mathbf{H}+\frac{\rho}{2}\mathbf{I},\texttt{'chol'}), \label{DC_ADMM1}\\
&\mathbf r^{k+1}=\mathbf F\oslash\big(-{\mathbf g}+\frac{\rho}{2}(\mathbf{s} ^k+\mathbf{v}^k)\big), \label{DC_ADMM2}
\end{align}
where $\oslash$ denotes the left matrix division which actually performs the underlying triangular solves in \eqref{LU_ADMM_update} and \eqref{LU_ADMM_update2} efficiently. 
\end{remark}

\begin{algorithm}[t]
  \caption{Global CD-ADMM Algorithm to Solve \eqref{problem_ADMM}}
  \label{alg:ADMM2}
  \begin{algorithmic}[1]
    \STATE \textbf{Input:} Penalty parameter $\rho>0$, tolerance $\epsilon>0$, maximum iterations $K_{\max}$,
    initial $\mathbf{s}^{(0)}$ with $\|\mathbf{s}^{(0)}\|_2=1$, and initial scaled dual variable $\mathbf{v}^{(0)}$.
    \STATE \textbf{Output:} $\mathbf{r}^{\star}=\mathbf{r}^{(k)}$, $\mathbf{s}^{\star}=\mathbf{s}^{(k)}$, and $\mathbf{v}^{\star}=\mathbf{v}^{(k)}$.
    \STATE Pre-compute the Cholesky decomposition of $\mathbf{H}+\frac{\rho}{2}\mathbf{I}$ as \eqref{DC_ADMM1}.
    \STATE Set $k=0$;
    \REPEAT
      \STATE \textbf{Update $\mathbf{r}^{(k+1)}$} by \eqref{DC_ADMM2}:
      \STATE \hspace{1.6em}  $\mathbf r^{k+1}=\mathbf F\oslash\big(-\mathbf g+\frac{\rho}{2}(\mathbf{s} ^k+\mathbf{v}^k)\big)$;      
      \STATE \textbf{Update $\mathbf{s}^{(k+1)}$} by \eqref{ADMM_solution2} (projection onto the unit sphere):
      \STATE \hspace{1.6em} $\mathbf{s}^{(k+1)} \leftarrow \dfrac{\mathbf{r}^{(k+1)}-\mathbf{v}^{(k)}}{\left\|\mathbf{r}^{(k+1)}-\mathbf{v}^{(k)}\right\|_2}$;
      \STATE \textbf{Update $\mathbf{v}^{(k+1)}$} by \eqref{ADMM_solution3}:
      \STATE \hspace{1.6em} $\mathbf{v}^{(k+1)} \leftarrow \mathbf{v}^{(k)}+\mathbf{s}^{(k+1)}-\mathbf{r}^{(k+1)}$;
      \STATE Compute primal residual:
      \STATE \hspace{1.6em} $r_{\mathrm{pri}}^{(k+1)} \leftarrow \left\|\mathbf{s}^{(k+1)}-\mathbf{r}^{(k+1)}\right\|_2$;
      \STATE Compute dual residual:
      \STATE \hspace{1.6em} $r_{\mathrm{dual}}^{(k+1)} \leftarrow \rho\left\|\mathbf{s}^{(k+1)}-\mathbf{s}^{(k)}\right\|_2$;
      \STATE $k \leftarrow k+1$;
    \UNTIL{$\max\{r_{\mathrm{pri}}^{(k)},\,r_{\mathrm{dual}}^{(k)}\}\le \epsilon$ or $k\ge K_{\max}$}
    \STATE \textbf{return} $\mathbf{r}^{(k)}$, $\mathbf{s}^{(k)}$, and $\mathbf{v}^{(k)}$;
  \end{algorithmic}
\end{algorithm}

We summarize the algorithm for solving SCLS in \eqref{problem_ADMM} via the low-complexity Cholesky-decomposition ADMM (CD-ADMM) in Algorithm~\ref{alg:ADMM2}. 

\section{Experiments}
\label{sec:experiments}
In this section, we conduct comprehensive experiments to evaluate the performance of our proposed ADMM and CD-ADMM algorithms. Our primary goal is to demonstrate that the proposed methods achieve global optimality comparable to SOCP while offering orders-of-magnitude improvements in computational efficiency, particularly in high-dimensional and sparse settings.
\subsection{Experiments Setup}
Following the experimental protocol established in \cite{wang2022solving}, we evaluate our methods on both real-world and synthetic datasets. \\
\textbf{Real-world Dataset:}
We evaluate these methods on three publicly available datasets: Wine Quality \cite{Cortez2009ModelingWP}, residential building \cite{Rafiei2016ANM}, Insurance\footnote{https://www.kaggle.com/mirichoi0218/insurance} and BlogFeedback\footnote{https://archive.ics.uci.edu/ml/datasets/BlogFeedback} . The dataset details are provided in Appendix \ref{app:real-world}. For each dataset, we simulate the strategic data provider's target labels $z$ under different manipulation intensities (denoted as $\mathcal{A}_{\mathrm{modest}}$ and $\mathcal{A}_{\mathrm{severe}}$ ), following the settings in \cite{wang2022solving}.\\
\textbf{Synthetic Datasets}: To assess scalability, we generate synthetic data matrices $X \in \mathbb{R}^{m \times n}$ with varying dimensions ($n$ ranging from $1,000$ to $30,000$) and sparsity levels (density list as  [$0.01$, $0.001, $$0.0001$]). The labels are generated via a linear model with Gaussian noise. The synthetic dataset was generated and set via a rule from \cite{wang2021fast}. We use the code from \cite{wang2022solving} to generate these data.\\
\textbf{Baselines and Proposed Method:} We compare our proposed algorithms against three representative state-of-the-art solvers for the SCLS problem. First, to establish a ground truth for global optimality, we evaluate against two convex reformulation approaches: Semidefinite Programming (SDP) and Second-Order Cone Programming (SOCP). Both approaches are solved via interior-point methods (using MOSEK or SeDuMi). Although they yield rigorous global optimal solutions, they are computationally expensive due to the requirement of spectral decomposition. Second, to assess computational efficiency, we compare against LTRSR, a nested Lanczos method for the Trust-Region Subproblem, which represents the state-of-the-art among iterative solvers.\\
\textbf{Implementation: }Following the protocol in \cite{wang2022solving}, we evaluate the performance of different algorithms against the SDP and SOCP approaches. All results are averaged over 10 trials. We solve each reformulation to its default precision and recover the corresponding SPG-LS solutions accordingly. All real-world dataset experiments are performed on an Apple laptop equipped with an Intel Core i9 CPU (2.3 GHz) and 16 GB RAM, running MATLAB 2025b. To accommodate computational constraints, all synthetic data experiments are conducted on an Apple laptop equipped with an M4 chip and 128 GB RAM, running the same version of MATLAB.

\subsection{Results}
In this section, we present the numerical performance of our proposed algorithms. We evaluate both the solution quality and computational efficiency (runtime) of the proposed CD-ADMM compared to state-of-the-art baselines. The experiments are conducted on both real-world datasets to verify global optimality and large-scale synthetic datasets to demonstrate scalability.
\begin{table}[htbp]
\centering
\caption{Relative error of objective values}
\label{tab:rel_error_obj}

\resizebox{0.49\textwidth}{!}{%
\begin{tabular}{l ccc ccc ccc}
\toprule
\multirow{2}{*}{Dataset} &
\multicolumn{3}{c}{$(f_{\mathrm{SOCP}}-f_{\mathrm{LTRSR}})/|f_{\mathrm{SOCP}}|$} &
\multicolumn{3}{c}{$(f_{\mathrm{SOCP}}-f_{\mathrm{ADMM}})/|f_{\mathrm{SOCP}}|$} &\\
\cmidrule(lr){2-4}\cmidrule(lr){5-7}\cmidrule(lr){8-10}
& AVG & MIN & MAX  & AVG & MIN & MAX  \\
\midrule
Wine Modest &
7.10E-10 & 2.44E-12 & 5.23E-09 &

8.31E-09 & 5.08E-12 & 6.45E-08 &\\
Wine Severe &
1.12E-10 & 1.62E-12 & 1.03E-09 &

1.12E-10 & 1.65E-12 & 1.03E-09 &\\
Build Modest &
4.93E-07 & 4.91E-08 & 1.08E-06 &

5.15E-07 & 5.22E-08 & 1.09E-06 &\\
Build Severe &
3.79E-08 & 2.22E-09 & 8.09E-08 &

3.80E-08 & 2.22E-09 & 8.10E-08 & \\
Insurance Modest &
1.35E-05 & 1.02E-06 & 4.37E-05 &

2.54E-05 & 2.08E-06 & 8.11E-05 & \\
Insurance Severe &
2.57E-06 & 1.34E-07 & 7.18E-06 &
1.22E-05 & 4.45E-07 & 6.59E-05 &\\
\bottomrule
\end{tabular}%
} 
\end{table}
\subsubsection{Performance on Real-world Data}
We first validate the global optimality and convergence behavior of our method on the real-world datasets. Before analyzing the errors in detail,
Figure \ref{fig:building} illustrates the runtime comparison on the Residential Building dataset. The yellow and purple dashed lines represent the Single SDP and SOCP baselines, respectively, while the solid lines represent the iterative solvers.
As observed, both ADMM-based methods (Green and Red lines) significantly outperform the convex relaxation approaches (SDP and SOCP) by several orders of magnitude. Specifically, our CD-ADMM consistently achieves low runtime across all sample sizes ($m$). While the LTRSR method (Blue line) is slightly faster on this specific low-dimensional dataset due to its matrix-free nature, CD-ADMM remains highly competitive and is substantially faster than the standard ADMM, demonstrating the effectiveness of the Cholesky pre-computation strategy.
Besides the speed of the algorithms, Table \ref{tab:rel_error_obj} reports the relative error of the objective values obtained by CD-ADMM compared to the SOCP solver, which serves as the ground truth for global optimality. The relative errors are consistently negligible, ranging from $10^{-5}$ to $10^{-12}$ across all datasets and manipulation scenarios.  It achieves the same order of magnitude accuracy as LTRSR. This empirically verifies that our nonconvex ADMM formulation converges to the global optimum, matching the rigorous SOCP benchmark. It is worth mentioning that CD-ADMM and ADMM obtained the same experimental results, which are therefore omitted from the table. Furthermore, we visualize the objective function value against the number of iterations on the Wine Modest dataset in Figure \ref{fig:ADMM_conve}. The black dashed line represents the global optimal value computed by the SOCP baseline. We observe these main points in the figure:
\begin{itemize}
    \item \textbf{Global Convergence:} Regardless of the penalty parameter $\rho$, the objective values of our algorithm decrease and converge to the SOCP optimal value (the black dashed line).
    \item \textbf{Fast Convergence:} The algorithm demonstrates high efficiency. For moderate and large penalty parameters (e.g., $\rho \in [5, 50]$), the objective value drops sharply and reaches the vicinity of the optimal solution within the first 20 iterations.
    \item \textbf{Robustness to Parameter $\rho$:} While smaller penalty parameters (e.g., $\rho=0.5$) lead to a relatively slower decay rate, the algorithm remains stable and eventually converges. The zoomed-in sub-plot (bottom-right corner) further confirms that after 150 iterations, the discrepancy between our solution and the SOCP ground truth is virtually indistinguishable across all $\rho$ settings.
\end{itemize}
In summary, by selecting the penalty parameter $\rho$ according to the specific data characteristics, the proposed method can achieve both high computational efficiency and superior solution quality.
\begin{figure}[t]
    \centering
    \begin{subfigure}[b]{0.24\textwidth}
        \centering
        \includegraphics[width=\linewidth]{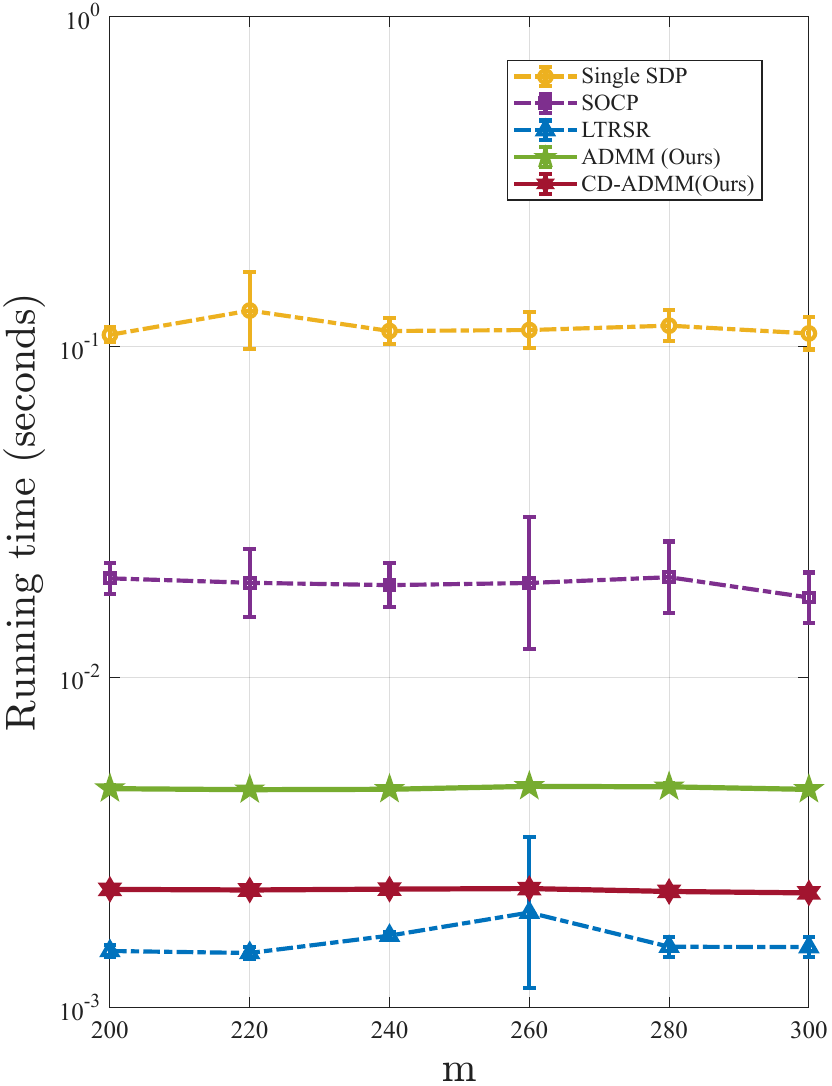}
        
    \end{subfigure}\hfill
    \begin{subfigure}[b]{0.24\textwidth}
        \centering
        \includegraphics[width=\linewidth]{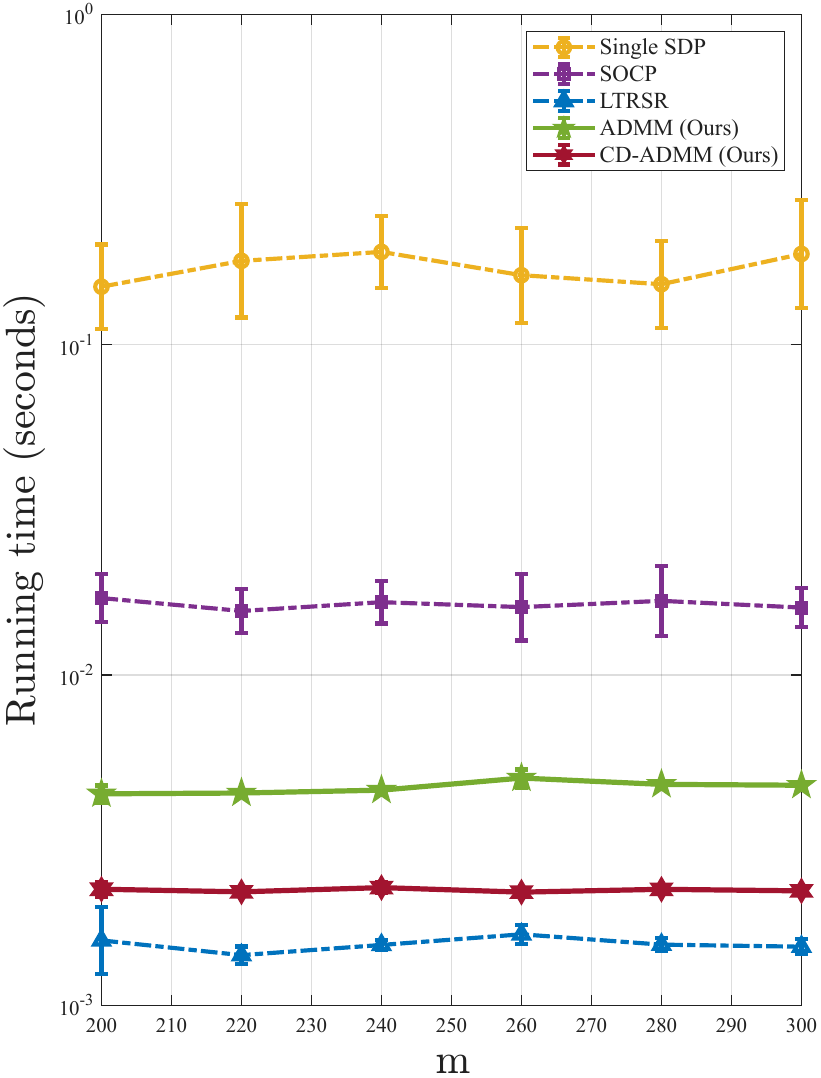}
    \end{subfigure}

    \caption{Comparison of different algorithms on the building dataset. The left and right plots correspond to $\mathcal{A}_{\mathrm{modest}}$ and $\mathcal{A}_{\mathrm{severe}}$, respectively.}
    \label{fig:building}
\end{figure}

\subsubsection{Performance on Synthetic Data}

\textbf{Dense Data:} In the dense data setting, we report the computational time in Table \ref{tab:time_dense_selected}. Regarding solution quality, our proposed methods achieve global optimality comparable to the state-of-the-art baselines across all dense instances; the detailed relative error results are deferred to Appendix \ref{app:dense}.
In terms of computational efficiency, the SOCP approach exhibits a heavy dependency on the spectral decomposition (recorded as ``eig time" in parentheses), which consistently consumes a substantial portion of the total runtime. In contrast, our proposed ADMM and CD-ADMM effectively bypass this bottleneck. As evidenced by Table 2, our methods consistently outperform the SOCP solver. For example, in the case of $m=2n$ and $n=10,000$, CD-ADMM achieves a speedup of approximately 1.89 times compared to SOCP. Furthermore, our approach remains highly competitive with the leading iterative solver, LTRSR. While LTRSR performs efficiently in lower dimensions, our CD-ADMM demonstrates robust scalability, even surpassing LTRSR in specific high-dimensional regimes. This confirms that our method offers a favorable trade-off between global optimality guarantees and runtime efficiency on dense data.

\textbf{Sparse Data:} We proceed to evaluate the performance on sparse synthetic datasets (sparsity $= 0.0001$). According to Table \ref{tab:sparsity_0001}, the SOCP approach is significantly slower across all regimes due to its reliance on spectral decomposition, which cannot fully exploit data sparsity; consequently, our CD-ADMM method achieves substantial speedups, reaching approximately 500 times faster than SOCP in the case of $m=0.5n$ and $n=30,000$. When compared with the iterative solver LTRSR, our method exhibits a distinct advantage in high-dimensional settings ($m \le n$): for instance, with $m=0.5n$ and $n=30,000$, CD-ADMM converges in 0.684s, outperforming LTRSR's 1.347s, which highlights the efficiency of the pre-computed sparse Cholesky factorization when the system size is dominated by the feature dimension. However, we observe that in large-sample regimes ($m > n$) with extremely large dimensions ($n \ge 25,000$), LTRSR exhibits better scalability due to its matrix-free nature; for example, at $m=3n$ and $n=30,000$, LTRSR (2.779s) surpasses CD-ADMM (8.972s) as the computational overhead of triangular solves increases in the over-determined regime.  Overall, CD-ADMM provides the best performance for high-dimensional problems ($m \le n$) and remains highly competitive (and orders of magnitude faster than SOCP) even in large-sample regimes, making it a robust choice for diverse sparse adversarial learning tasks.For complete experimental results on datasets with varying sparsity levels, please refer to Appendix \ref{app:sparse}. 

\begin{table}[htbp]
\centering
\caption{Time (s) on Synthetic Data without Sparsity ($\gamma = 0.1$) Ratio1 $=\mathrm{SOCP}/\mathrm{CD\text{-}ADMM}$ and Ratio2 $=\mathrm{ADMM}/\mathrm{CD\text{-}ADMM}$}
\label{tab:time_dense_selected}
\resizebox{0.48\textwidth}{!}{%
\begin{tabular}{rcccccc}
\toprule

\multicolumn{7}{c}{$m = 2n$} \\
\midrule
$n$ & SOCP (eig time) & LTRSR & ADMM & CD-ADMM & Ratio1  & Ratio2 \\
\midrule
1000  & 0.167 (0.087) & 0.101 & \textbf{0.027} & 0.071 & 2.35 & 0.26 \\
2000  & 0.591 (0.343) & 0.440 & \textbf{0.164} & 0.287 & 2.06 & 0.37 \\
4000  & 3.099 (1.677) & 1.962 & \textbf{1.280} & 1.852 & 1.67 & 0.65 \\
6000  & 8.461 (3.629) & 4.495 & \textbf{4.436} & 4.691 & 1.80 & 0.99 \\
8000  & 17.017 (7.130) & \textbf{8.942} & 9.696 & 9.488 & 1.79 & 1.08 \\
10000 & 31.149 (12.505) & \textbf{16.118} & 18.571 & 16.518 & 1.89 & 1.15 \\

\midrule
\multicolumn{7}{c}{$m = n$} \\
\midrule
$n$ & SOCP (eig time) & LTRSR & ADMM & CD-ADMM & Ratio1  & Ratio2 \\
\midrule
1000  & 0.144 (0.087) & 0.096 & \textbf{0.035} & 0.076 & 1.90 & 0.36 \\
2000  & 0.555 (0.340) & 0.387 & \textbf{0.142} & 0.272 & 2.04 & 0.37 \\
4000  & 2.739 (1.582) & 1.630 & \textbf{1.100} & 1.583 & 1.73 & 0.67 \\
6000  & 7.438 (3.602) & 3.859 & \textbf{3.657} & 4.079 & 1.82 & 0.95 \\
8000  & 15.739 (6.856) & 9.063 & 8.354 & \textbf{8.175} & 1.93 & 0.92 \\
10000 & 30.129 (12.431) & 14.039 & 15.937 & \textbf{13.831} & 2.18 & 1.14 \\

\midrule
\multicolumn{7}{c}{$m = 0.5n$} \\
\midrule
$n$ & SOCP (eig time) & LTRSR & ADMM & CD-ADMM & Ratio1  & Ratio2 \\
\midrule
1000  & 0.154 (0.099) & 0.105 & \textbf{0.038} & 0.080 & 1.91 & 0.36 \\
2000  & 0.527 (0.328) & 0.230 & \textbf{0.136} & 0.262 & 2.01 & 0.59 \\
4000  & 2.680 (1.593) & \textbf{0.746} & 1.036 & 1.545 & 1.73 & 1.39 \\
6000  & 6.982 (3.300) & \textbf{2.334} & 3.374 & 3.855 & 1.81 & 1.45 \\
8000  & 14.687 (6.409) & \textbf{3.998} & 7.732 & 7.539 & 1.95 & 1.93 \\
10000 & 27.205 (11.090) & \textbf{4.939} & 14.614 & 12.516 & 2.17 & 2.96 \\

\bottomrule
\end{tabular}%
}
\end{table}

\begin{table}[t]
\centering
\caption{Time (seconds) on synthetic data (sparsity $=0.0001$). Ratio1 $=\mathrm{SOCP}/\mathrm{CD\text{-}ADMM}$ and
Ratio2 $=\mathrm{ADMM}/\mathrm{CD\text{-}ADMM}$.
The fastest time in each row is in bold. $\gamma = 0.1$.}
\label{tab:sparsity_0001}
\setlength{\tabcolsep}{6pt}
\resizebox{0.48\textwidth}{!}{%
\begin{tabular}{c r l r r r r r}
\hline
$m$ & $n$ & SOCP (eig) & LTRSR & ADMM & CD-ADMM & Ratio1 & Ratio2 \\
\hline
$0.5n$ & 10000 & 12.196 (11.623) & 0.653 & 13.390 & \textbf{0.085} & 143 & 157 \\
$0.5n$ & 15000 & 34.800 (33.016) & 0.638 & 42.199 & \textbf{0.116} & 301 & 365 \\
$0.5n$ & 20000 & 79.709 (72.396) & 0.842 & 100.689 & \textbf{0.310} & 257 & 325 \\
$0.5n$ & 25000 & 186.674 (165.716) & 1.090 & 186.911 & \textbf{0.442} & 423 & 423 \\
$0.5n$ & 30000 & 342.463 (293.426) & 1.347 & 318.067 & \textbf{0.684} & 501 & 465 \\
\hline
$n$  & 10000 & 14.041 (12.492) & 0.502 & 13.749 & \textbf{0.049} & 285 & 279 \\
$n$  & 15000 & 47.203 (41.625) & 0.757 & 43.796 & \textbf{0.179} & 264 & 245 \\
$n$  & 20000 & 118.122 (96.957) & 0.976 & 100.073 & \textbf{0.377} & 313 & 266 \\
$n$  & 25000 & 236.470 (188.916) & 1.266 & 191.305 & \textbf{0.829} & 285 & 231 \\
$n$  & 30000 & 432.633 (333.367) & \textbf{1.694} & 88.595 & 2.006 & 216 & 44 \\
\hline
$2n$ & 10000 & 16.205 (13.393) & 0.701 & 13.726 & \textbf{0.101} & 160 & 136 \\
$2n$ & 15000 & 55.850 (45.386) & 0.946 & 12.803 & \textbf{0.259} & 215 & 49 \\
$2n$ & 20000 & 142.832 (109.946) & 1.315 & 27.696 & \textbf{0.811} & 176 & 34 \\
$2n$ & 25000 & 281.458 (209.865) & \textbf{1.720} & 53.666 & 2.568 & 110 & 21 \\
$2n$ & 30000 & 481.532 (353.768) & \textbf{2.227} & 90.885 & 5.790 & 83 & 16 \\
\hline
$3n$ & 10000 & 18.850 (14.842) & 0.810 & 13.647 & \textbf{0.109} & 173 & 125 \\
$3n$ & 15000 & 61.937 (47.430) & 1.204 & 12.798 & \textbf{0.434} & 143 & 29 \\
$3n$ & 20000 & 150.974 (112.789) & 1.619 & 28.738 & \textbf{1.510} & 100 & 19 \\
$3n$ & 25000 & 295.949 (218.266) & \textbf{2.118} & 54.505 & 3.837 & 77 & 14 \\
$3n$ & 30000 & 489.773 (343.090) & \textbf{2.779} & 83.778 & 8.972 & 55 & 9 \\
\hline
\end{tabular}}
\end{table}

\begin{figure}[t]
\centering{\includegraphics[width=0.98\columnwidth]{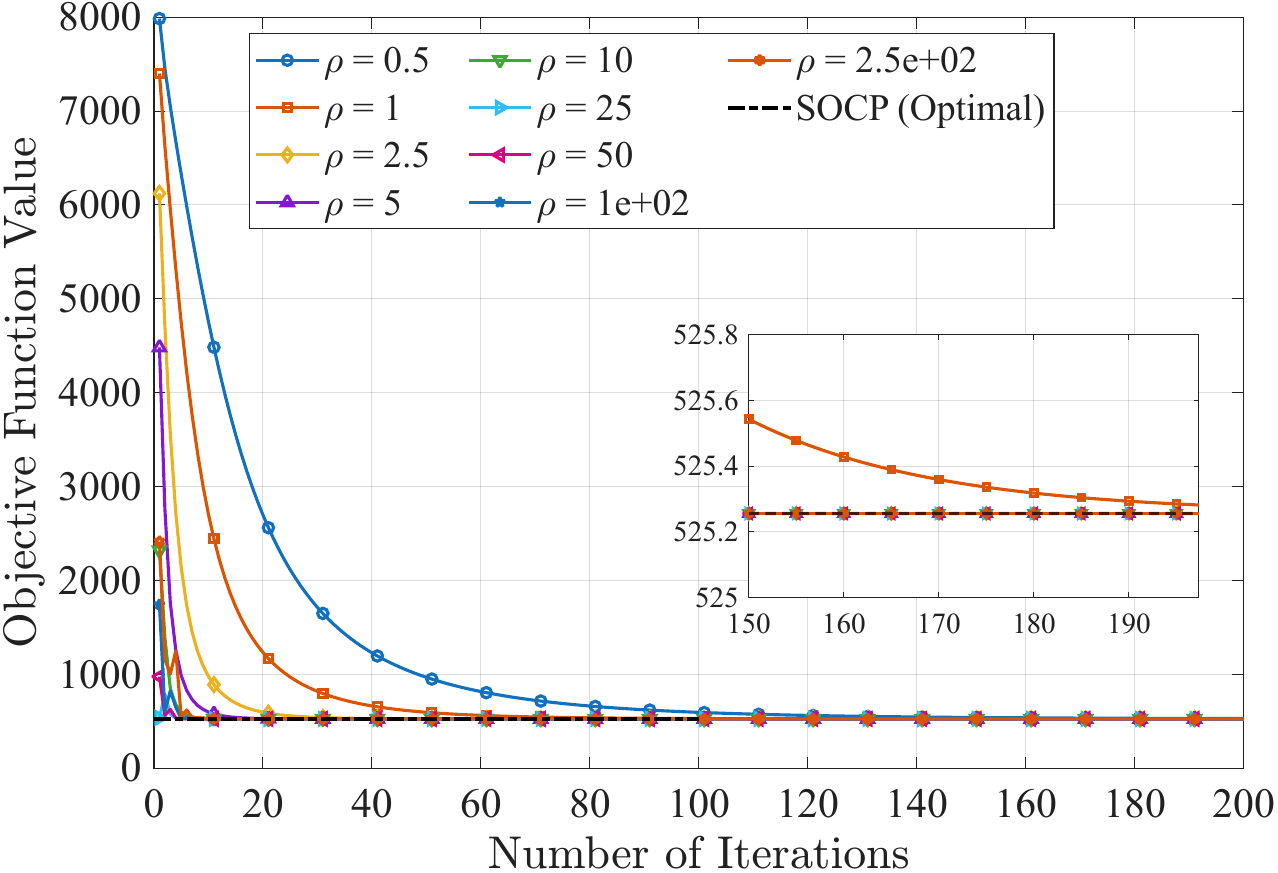}}
\caption{Convergence Curves on the Wine Modest for Different Values of $\rho$}
\label{fig:ADMM_conve}
\end{figure}

\section{Conclusion}
We studied the Stackelberg prediction game with least-squares losses and leveraged an exact SCLS reformulation to obtain a tractable single-level model. Building on a consensus splitting of the SCLS problem, we developed a low-complexity ADMM solver with closed-form updates. This structure enables scalable implementations via one-time Cholesky decomposition to avoid explicit matrix inversion. We further established a duality-based characterization and showed that the proposed iterations converge to a primal-dual point satisfying the optimality conditions of the SCLS problem. Experiments on synthetic and real datasets demonstrated that our method achieves competitive solution quality while substantially reducing runtime compared to state-of-the-art SCLS solvers in large-scale and sparse datasets.

\paragraph{Limitations and Future Works.}
Although our method involves only a single tuning parameter $\rho$, its practical performance for different datasets may still depend on this choice. 
A promising direction is to develop adaptive penalty-update rules that are both computationally light and equipped with rigorous convergence guarantees, and to incorporate Bayesian optimization for automatic, data-driven hyperparameter tuning. 
It is also of interest to extend the proposed framework beyond least-squares losses and linear predictors, such as fractional quadratic losses and highly nonlinear predictors, thereby covering broader classes of Stackelberg learning problems.

\bibliography{example_paper}
\bibliographystyle{icml2026}

\newpage
\appendix
\onecolumn
\section{Proof of Strong Duality}
\label{proof:psd}
\begin{proof}
The Lagrangian of problem~\eqref{problem_ADMM} is
\begin{equation}\label{eq:SCLS_Lagrangian}
\mathcal{L}_0(\mathbf{r},\mathbf{s},\mathbf{u},\nu)
=
f(\mathbf{r})+\nu(\mathbf{s}^T\mathbf{s}-1)+\mathbf{u}^T(\mathbf{s}-\mathbf{r}),
\end{equation}
where $\nu\in\mathbb{R}$ and $\mathbf{u}\in\mathbb{R}^{n+1}$ are the dual variables. Since the constraint
$\mathbf{s}^T\mathbf{s}=1$ must be active at optimality, we focus on the non-trivial case where $\nu\neq 0$ and $\mathbf{u}\neq \mathbf{0}$.

Setting the gradients of~\eqref{eq:SCLS_Lagrangian} with respect to $\mathbf{r}$ and $\mathbf{s}$ to zero gives
\begin{subequations}\label{eq:SCLS_Lag_der}
\begin{align}
\nabla_{\mathbf{r}}\mathcal{L}_0(\mathbf{r},\mathbf{s},\mathbf{u},\nu)
&=
\nabla f(\mathbf{r})-\mathbf{u}
=
2\mathbf{H}\mathbf{r}+2\mathbf{g}-\mathbf{u}
=\mathbf{0}, \label{eq:SCLS_Lag_der_r}\\
\nabla_{\mathbf{s}}\mathcal{L}_0(\mathbf{r},\mathbf{s},\mathbf{u},\nu)
&=
2\nu\mathbf{s}+\mathbf{u}
=\mathbf{0}. \label{eq:SCLS_Lag_der_s}
\end{align}
\end{subequations}
Therefore, the minimizing primal variables for fixed $(\mathbf{u},\nu)$ satisfy
\begin{subequations}\label{eq:SCLS_Lag_opt}
\begin{align}
\mathbf{r}^\star
&=
\mathbf{H}^{-1}\!\left(\frac{1}{2}\mathbf{u}-\mathbf{g}\right), \label{eq:SCLS_Lag_opt_r}\\
\mathbf{s}^\star
&=
-\frac{1}{2\nu}\mathbf{u}. \label{eq:SCLS_Lag_opt_s}
\end{align}
\end{subequations}

Substituting~\eqref{eq:SCLS_Lag_opt} into~\eqref{eq:SCLS_Lagrangian} defines the dual function
$g(\nu,\mathbf{u})=\inf_{\mathbf{r},\mathbf{s}}\mathcal{L}_0(\mathbf{r},\mathbf{s},\mathbf{u},\nu)$, which can be given by
\begin{align} \label{EQCQP_ADMM_dual_obj}
    g(\nu,{\bf u}) 
  = & \inf_{\mathbf{r},\mathbf{s}}\mathcal{L}_0(\mathbf{r},\mathbf{s},{\bf u},\nu) = \mathcal{L}_0({\bf r}^{\star},{\bf s}^{\star},{\bf u},\nu) \nonumber \\
  = &  
  p- {\bf g}^T{\bf H}^{-1} {\bf g}^T
  -\frac{1}{4\nu}{\bf u}^T{\bf u}-{\nu}-\frac{1}{4}{\bf u}^T{\bf H}^{-1}{\bf u}-
  {\bf u}^T{\bf H}^{-1}{\bf g}. 
\end{align}
Then, the dual problem is
\begin{equation}\label{eq:SCLS_dual}
d^\star:\ \max_{\nu,\mathbf{u}}\ g(\nu,\mathbf{u}).
\end{equation}

Next, we characterize the dual optimizer by stationarity of $g(\nu,\mathbf{u})$.
From~\eqref{eq:SCLS_Lag_opt_s} and the primal feasibility $\|\mathbf{s}^\star\|_2^2=1$, we obtain
\begin{equation}\label{eq:SCLS_nu_relation}
\|\mathbf{s}^\star\|_2^2
=
\frac{\|\mathbf{u}\|_2^2}{4\nu^2}
=1
\quad\Longrightarrow\quad
\nu^\star=\pm\frac{\|\mathbf{u}^\star\|_2}{2}.
\end{equation}
Moreover, combining~\eqref{eq:SCLS_Lag_opt_r} with the consensus constraint $\mathbf{s}^\star=\mathbf{r}^\star$ yields
\begin{equation}\label{eq:SCLS_consensus_relation}
-\frac{1}{2\nu^\star}\mathbf{u}^\star
=
\mathbf{H}^{-1}\!\left(\frac{1}{2}\mathbf{u}^\star-\mathbf{g}\right),
\end{equation}
which implies $\mathbf{s}^\star-\mathbf{r}^\star=\mathbf{0}$ at the saddle point. Consequently, the linear term
$(\mathbf{u}^\star)^T(\mathbf{s}^\star-\mathbf{r}^\star)$ in~\eqref{eq:SCLS_Lagrangian} vanishes.

Combining \eqref{eq:SCLS_Lag_opt}, \eqref{eq:SCLS_nu_relation}, and \eqref{eq:SCLS_consensus_relation}, i.e.,
at $(\nu^\star,\mathbf{u}^\star)$ and the corresponding $(\mathbf{r}^\star,\mathbf{s}^\star)$, we then have
\begin{align}\label{eq:SCLS_strong_duality}
d^\star
=
g(\nu^\star,\mathbf{u}^\star)
=
\mathcal{L}_0(\mathbf{r}^\star,\mathbf{s}^\star,\mathbf{u}^\star,\nu^\star)
=
f(\mathbf{r}^\star)
=
p^\star,
\end{align}
which shows that the duality gap is zero. Hence, strong duality holds for~\eqref{problem_ADMM}, and the corresponding
Lagrangian saddle point $(\mathbf{r}^\star,\mathbf{s}^\star,\mathbf{u}^\star,\nu^\star)$ yields a globally optimal solution.
This completes the proof.
\end{proof}

\section{Proof of Global Convergence}
\label{proof:pgc}
\begin{proof}
We prove the convergence of the proposed ADMM scheme for problem~\eqref{problem_ADMM} with a fixed penalty parameter $\rho>0$.

\paragraph{Step 1: Preliminaries and residual definitions.}
Recall the scaled augmented Lagrangian
\begin{align}\label{eq:scaled_AL}
\mathcal{L}_2(\mathbf{r},\mathbf{s},\mathbf{v})
=
f(\mathbf{r})
+\frac{\rho}{2}\|\mathbf{s}-\mathbf{r}+\mathbf{v}\|_2^2
-\frac{\rho}{2}\|\mathbf{v}\|_2^2,
\end{align}
where $\mathbf{s}^T\mathbf{s}=1$.
Define the primal and dual residuals
\[
\mathbf{d}^{k+1}\triangleq \mathbf{s}^{k+1}-\mathbf{r}^{k+1},\qquad
r_{\mathrm{pri}}^{k+1}\triangleq \|\mathbf{d}^{k+1}\|_2,\qquad
r_{\mathrm{dual}}^{k+1}\triangleq \rho\|\mathbf{s}^{k+1}-\mathbf{s}^k\|_2,
\]
and note that $\mathbf{v}^{k+1}-\mathbf{v}^k=\mathbf{d}^{k+1}$.

\paragraph{Step 2: Descent property of the $\mathbf{r}$-update.}
The $\mathbf{r}$-subproblem in~\eqref{ADMM_update1} minimizes
\[
f(\mathbf{r})+\frac{\rho}{2}\|\mathbf{s}^k-\mathbf{r}+\mathbf{v}^k\|_2^2,
\]
whose Hessian equals $2\mathbf{H}+\rho\mathbf{I}\succeq \rho\mathbf{I}$.
Hence, the objective is $\rho$-strongly convex.
By the standard descent property of strongly convex functions, we obtain
\begin{equation}\label{eq:r_descent_proof}
\mathcal{L}_2(\mathbf{r}^k,\mathbf{s}^k,\mathbf{v}^k)
-\mathcal{L}_2(\mathbf{r}^{k+1},\mathbf{s}^k,\mathbf{v}^k)
\ge
\frac{\rho}{2}\|\mathbf{r}^{k+1}-\mathbf{r}^k\|_2^2.
\end{equation}

\paragraph{Step 3: Descent property of the $\mathbf{s}$-update.}
The $\mathbf{s}$-subproblem in~\eqref{ADMM_update2} is an exact projection onto the unit sphere.
Therefore,
\begin{equation}\label{eq:s_descent_proof}
\mathcal{L}_2(\mathbf{r}^{k+1},\mathbf{s}^k,\mathbf{v}^k)
-
\mathcal{L}_2(\mathbf{r}^{k+1},\mathbf{s}^{k+1},\mathbf{v}^k)
\ge 0.
\end{equation}

\paragraph{Step 4: Effect of the dual update.}
Using $\mathbf{v}^{k+1}=\mathbf{v}^k+\mathbf{d}^{k+1}$, we directly compute
\begin{equation}\label{eq:v_increase_proof}
\mathcal{L}_2(\mathbf{r}^{k+1},\mathbf{s}^{k+1},\mathbf{v}^{k+1})
=
\mathcal{L}_2(\mathbf{r}^{k+1},\mathbf{s}^{k+1},\mathbf{v}^k)
+\frac{\rho}{2}\|\mathbf{d}^{k+1}\|_2^2.
\end{equation}

\paragraph{Step 5: Lyapunov function and global descent.}
Define the Lyapunov function for $k\ge1$ as
\begin{equation}\label{eq:Lyapunov_proof}
\Phi^k
\triangleq
\mathcal{L}_2(\mathbf{r}^k,\mathbf{s}^k,\mathbf{v}^k)
+\frac{\rho}{2}\|\mathbf{s}^k-\mathbf{s}^{k-1}\|_2^2.
\end{equation}
Combining \eqref{eq:r_descent_proof}--\eqref{eq:v_increase_proof} and using
$\mathbf{s}^{k+1}-\mathbf{s}^k\to0$, we obtain
\begin{equation}\label{eq:Phi_descent_proof}
\Phi^k-\Phi^{k+1}
\ge
\frac{\rho}{2}\|\mathbf{r}^{k+1}-\mathbf{r}^k\|_2^2
+\frac{\rho}{2}\|\mathbf{s}^{k+1}-\mathbf{s}^k\|_2^2
+\frac{\rho}{2}\|\mathbf{d}^{k+1}\|_2^2.
\end{equation}
Since $f(\mathbf{r})$ is bounded below on the unit sphere, $\{\Phi^k\}$ is bounded below and monotonically decreasing, hence convergent.

\paragraph{Step 6: Residual and variable convergence.}
Summing~\eqref{eq:Phi_descent_proof} over $k$ yields
\[
\sum_{k=0}^{\infty}\|\mathbf{r}^{k+1}-\mathbf{r}^k\|_2^2
+
\sum_{k=0}^{\infty}\|\mathbf{s}^{k+1}-\mathbf{s}^k\|_2^2
+
\sum_{k=0}^{\infty}\|\mathbf{d}^{k+1}\|_2^2
<\infty.
\]
Consequently,
\[
r_{\mathrm{pri}}^k=\|\mathbf{s}^k-\mathbf{r}^k\|_2\to0,
\qquad
r_{\mathrm{dual}}^k=\rho\|\mathbf{s}^k-\mathbf{s}^{k-1}\|_2\to0,
\]
and $\|\mathbf{v}^{k+1}-\mathbf{v}^k\|_2\to0$.

\paragraph{Step 7: Stationary point convergence.}
Let $(\mathbf{r}^{k_j},\mathbf{s}^{k_j},\mathbf{v}^{k_j})$ be a convergent subsequence with limit
$(\bar{\mathbf{r}},\bar{\mathbf{s}},\bar{\mathbf{v}})$.
From $r_{\mathrm{pri}}^k\to0$ and $\mathbf{s}^{kT}\mathbf{s}^k=1$, we have
$\bar{\mathbf{r}}=\bar{\mathbf{s}}$ and $\|\bar{\mathbf{r}}\|_2=1$.
Taking limits in the optimality conditions of the $\mathbf{r}$- and $\mathbf{s}$-subproblems shows that
$(\bar{\mathbf{r}},\bar{\mathbf{s}},\bar{\mathbf{v}})$ satisfies the KKT conditions of~\eqref{problem_ADMM}.
Therefore, every accumulation point of the ADMM iterates is a stationary point of problem~\eqref{problem_ADMM}.

This completes the proof.
\end{proof}

\section{More Experimental Results}
In this section, we provide a comprehensive presentation of the experimental results. We focus on reporting the wall-clock time and the relative accuracy of various algorithms on different datasets with respect to the baseline method.
\subsection{Real-World Data}
\label{app:real-world}
For a fair comparison, we adopt the same experimental settings for the real-world datasets Wine Quality \cite{Cortez2009ModelingWP}, residential building \cite{Rafiei2016ANM}, Insurance\footnote{https://www.kaggle.com/mirichoi0218/insurance} and BlogFeedback\footnote{https://archive.ics.uci.edu/ml/datasets/BlogFeedback} as those used in \cite{wang2019global,wang2022solving}. The detailed characteristics of the datasets are summarized in Table~\ref{tab:datasets_real}. We also compare the running time of all algorithms on these datasets, as shown in Figure \ref{fig:wine_exp}, \ref{fig:insurance_exp}, and \ref{fig:blogfeedback_exp}
\begin{table}[htbp]
    \centering
    \caption{Statistics of Real-world Datasets}
    \label{tab:datasets_real}
    \resizebox{0.8\textwidth}{!}{%
    \begin{tabular}{lcccc}
        \toprule
        \textbf{Dataset} & \textbf{Samples ($m$)} & \textbf{Features ($n$)} & \textbf{Source} & \textbf{Description} \\
        \midrule
        \textbf{Wine Quality}  & 1,599 & 11 & UCI  & Physicochemical properties \\
        \textbf{Residential Building}  & 372 & 107 & UCI  & Construction \& market data \\
        \textbf{Insurance} & 1,338 & 7 & Kaggle  & Beneficiary demographics \\
        \textbf{BlogFeedback}  & 52,397 & 281 & UCI  & Blog post metadata \\
        \bottomrule
    \end{tabular}%
    }
\end{table}
\begin{figure}[t] 
    \centering
    \begin{subfigure}[b]{0.49\linewidth}
        \centering
    \includegraphics[width=\linewidth]{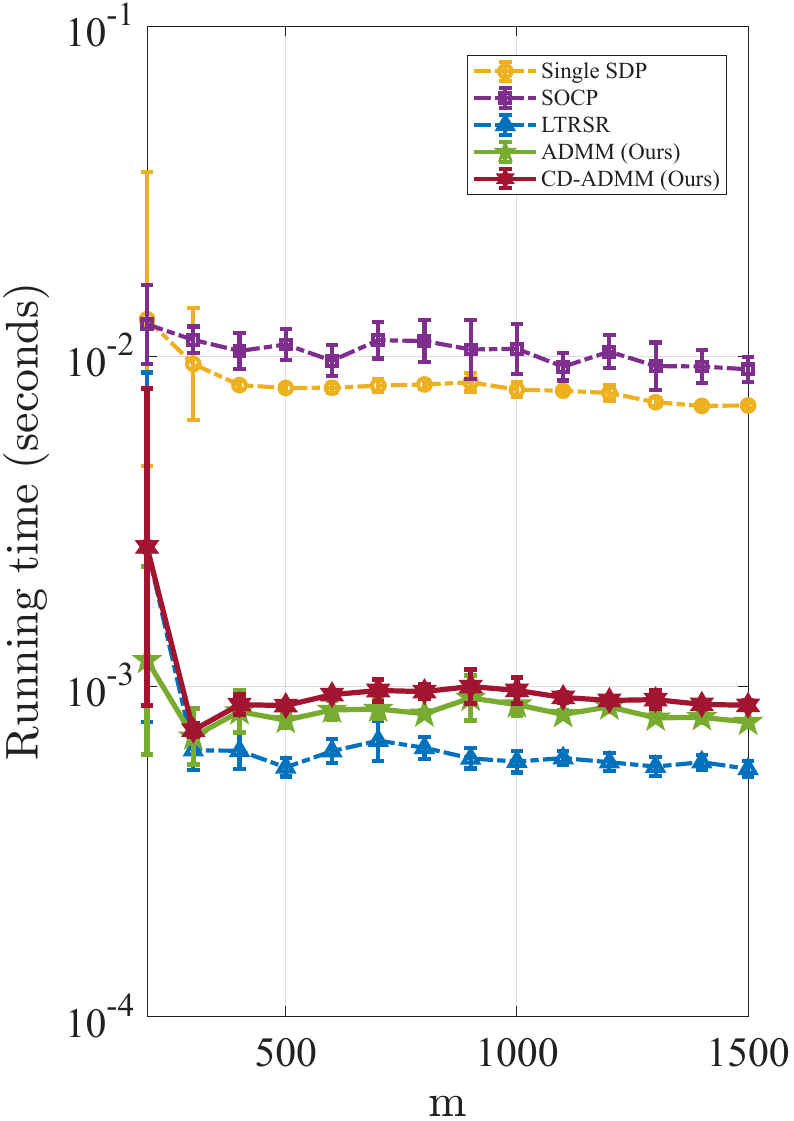}
        
    \end{subfigure}
    \hfill
    \begin{subfigure}[b]{0.49\linewidth}
        \centering        \includegraphics[width=\linewidth]{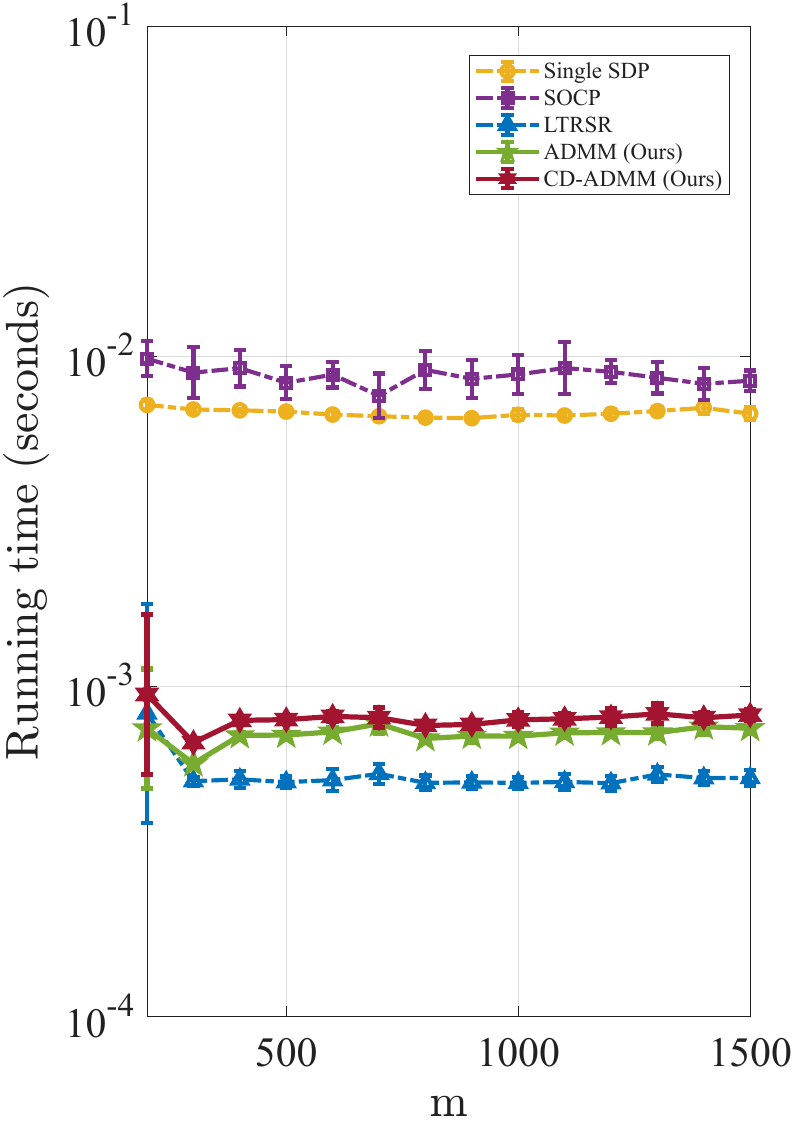} 
    \end{subfigure}

    \caption{Comparison of different algorithms on the wine dataset. The left and right plots correspond to $\mathcal{A}_{\mathrm{modest}}$ and $\mathcal{A}_{\mathrm{severe}}$, respectively.}
    \label{fig:wine_exp}
    \vspace{-0.3cm} 
\end{figure}

\begin{figure}[t] 
    \centering
    \begin{subfigure}[b]{0.49\linewidth}
        \centering
    \includegraphics[width=\linewidth]{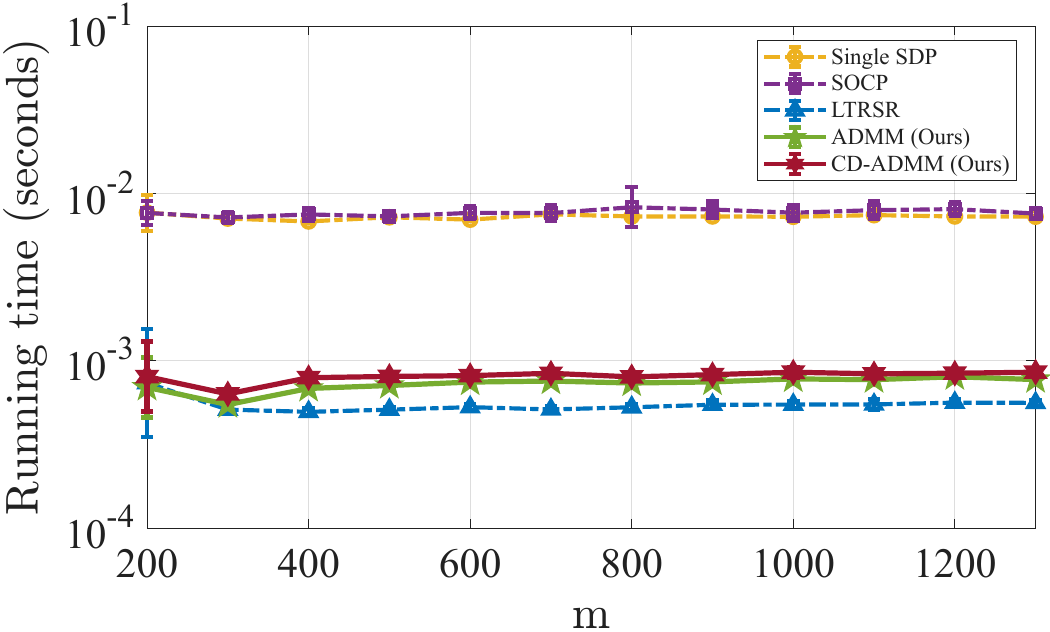}
        
    \end{subfigure}
    \hfill
    \begin{subfigure}[b]{0.49\linewidth}
        \centering        \includegraphics[width=\linewidth]{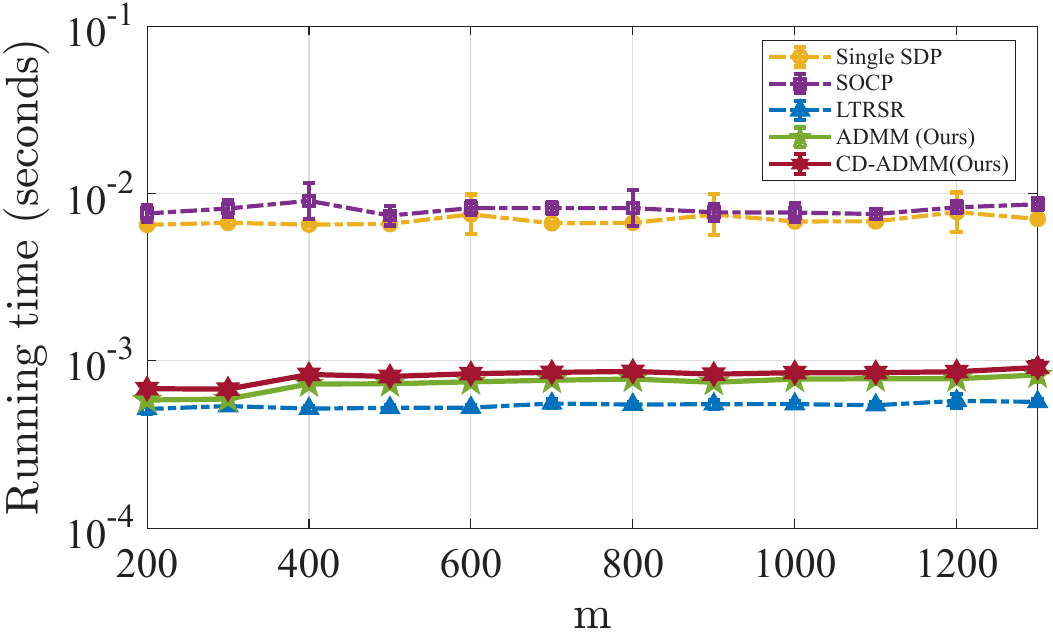} 
    \end{subfigure}

    \caption{Comparison of different algorithms on the insurance dataset. The left and right plots correspond to $\mathcal{A}_{\mathrm{modest}}$ and $\mathcal{A}_{\mathrm{severe}}$, respectively.}
    \label{fig:insurance_exp}
    \vspace{-0.3cm} 
\end{figure}

\begin{figure}[t]
    \centering
    \begin{subfigure}[b]{0.49\linewidth}
        \centering
      \includegraphics[width=\linewidth]{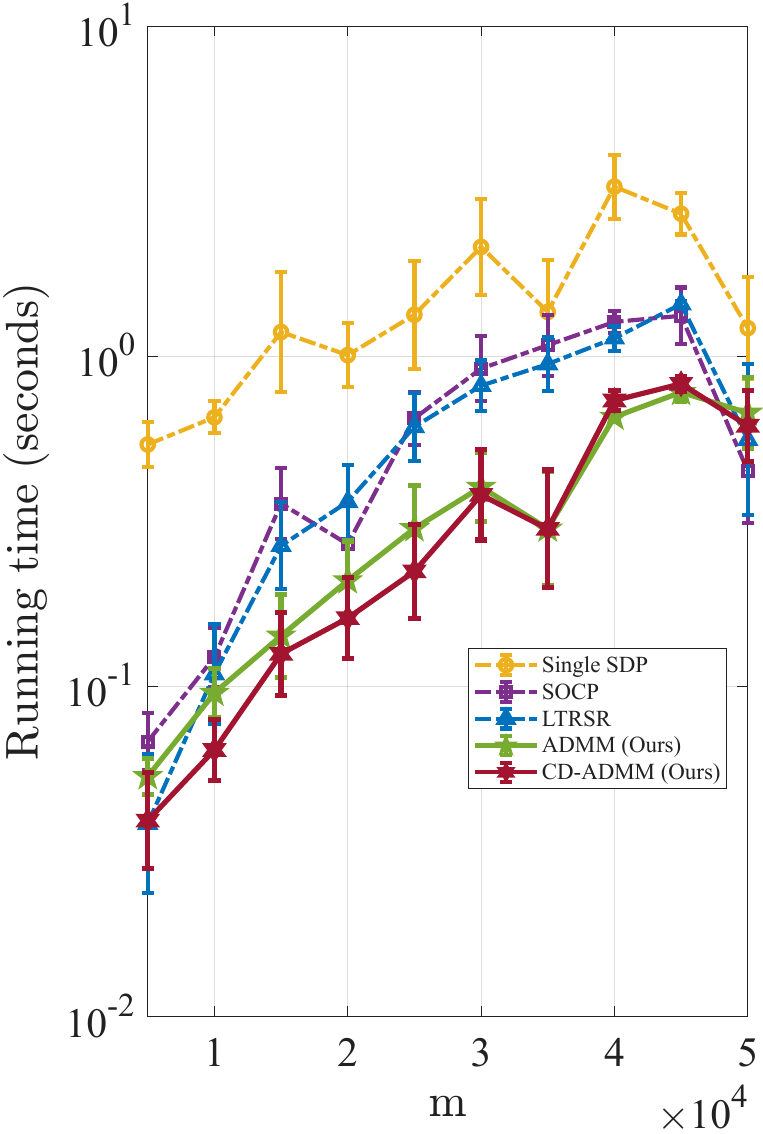}
    \end{subfigure}
    \hfill
    \begin{subfigure}[b]{0.49\linewidth}
        \centering
        \includegraphics[width=\linewidth]{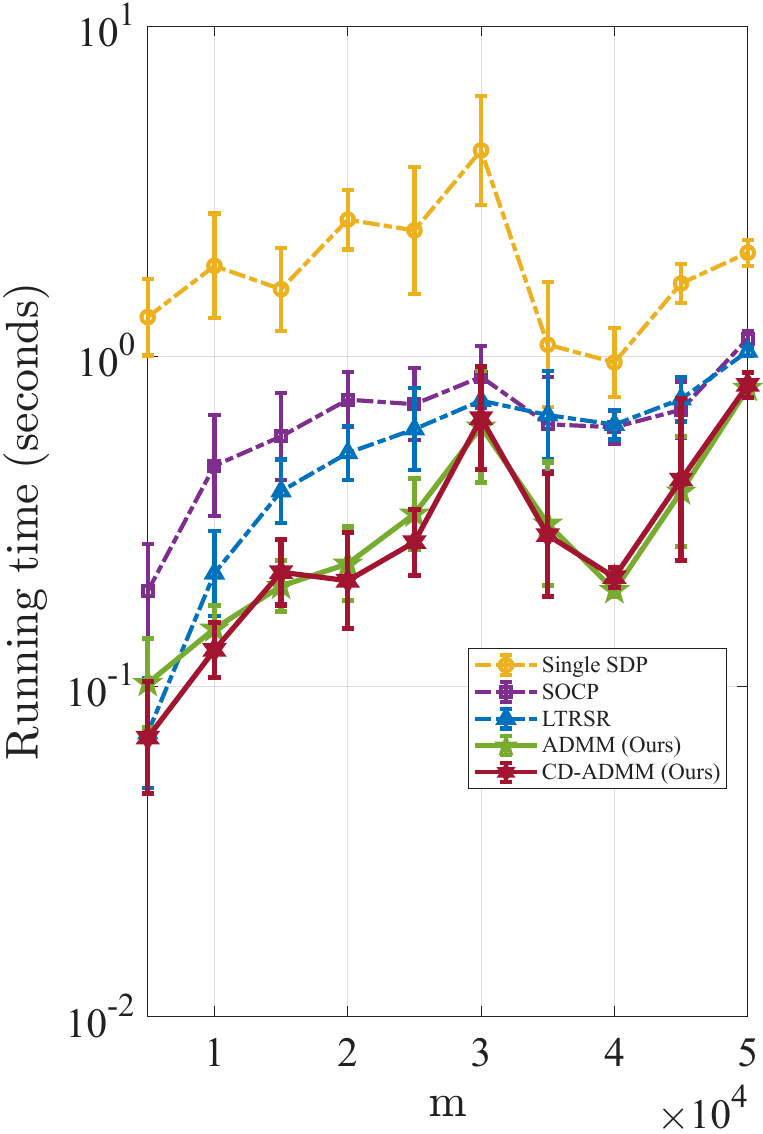}

    \end{subfigure}
    \caption{Comparison of different algorithms on the blogfeedback dataset. The left and right plots correspond to $\mathcal{A}_{\mathrm{modest}}$ and $\mathcal{A}_{\mathrm{severe}}$, respectively.}
    \label{fig:blogfeedback_exp}
    \vspace{-0.3cm}
\end{figure}

\subsection{Synthetic Dataset}
To further substantiate the scalability and efficiency of our proposed reformulation, particularly in high-dimensional regimes, we conduct comprehensive experiments on synthetic datasets across a wide spectrum of hyperparameters, dimensions, and sparsity levels.
\subsubsection{Dense Data}
\label{app:dense}
Table \ref{tab:time_dense_gamma001} reports the wall-clock time on dense synthetic dataset across sample-to-features ratios ($m/n$) with $/gamma=0.001$. The proposed ADMM-based methods provide significant speedups. Specifically, the standard ADMM is most efficient for moderate dimensions ($n \le 6000$), while CD-ADMM demonstrates superior scalability in high-dimensional regimes (e.g., $m=n, n=10,000$), outperforming both SOCP and the standard ADMM as the feature space expands.
In addition to computational efficiency, we further validate the solution quality of our proposed methods. Table \ref{tab:relerr_dense_gamma01} and \ref{tab:relerr_dense_gamma001} demonstrate that the significant speedup achieved by our algorithm does not come at the expense of accuracy, yielding precision comparable to state-of-the-art baselines.

\begin{table}[htbp]
\centering
\caption{Time (seconds) on synthetic data without sparsity ($\gamma=0.01$). Ratio1 $=\mathrm{SOCP}/\mathrm{CD\text{-}ADMM}$ and Ratio2 $=\mathrm{ADMM}/\mathrm{CD\text{-}ADMM}$ The fastest time in each row is in bold.}
\label{tab:time_dense_gamma001}
\setlength{\tabcolsep}{6pt}
\begin{tabular}{rcccccc}
\toprule

\multicolumn{7}{c}{$m = 2n$} \\
\midrule
$n$ & SOCP (eig time) & LTRSR & ADMM & CD-ADMM & Ratio1  & Ratio2 \\
\midrule
1000  & 0.2942 (0.0974) & 0.1014 & \textbf{0.0394} & 0.0649 & 4.53 & 0.39 \\
2000  & 0.7873 (0.3467) & 0.4100 & \textbf{0.1583} & 0.2809 & 2.80 & 0.39 \\
4000  & 3.6944 (1.6107) & 1.9316 & \textbf{1.2777} & 1.7890 & 2.07 & 0.66 \\
6000  & 7.7093 (3.6017) & 4.7605 & \textbf{4.2252} & 4.6706 & 1.65 & 0.89 \\
8000  & 18.1199 (6.9188) & \textbf{8.4709} & 9.9534 & 9.7542 & 1.86 & 1.18 \\
10000 & 32.0714 (12.4821) & \textbf{13.7620} & 18.8127 & 16.7652 & 1.91 & 1.37 \\

\midrule
\multicolumn{7}{c}{$m = n$} \\
\midrule
$n$ & SOCP (eig time) & LTRSR & ADMM & CD-ADMM & Ratio1 & Ratio2 \\
\midrule
1000  & 0.1567 (0.1041) & 0.1052 & \textbf{0.0276} & 0.0731 & 2.14 & 0.26 \\
2000  & 0.7367 (0.3639) & 0.3684 & \textbf{0.1592} & 0.2561 & 2.88 & 0.43 \\
4000  & 3.4745 (1.6937) & 1.9880 & \textbf{1.1545} & 1.6047 & 2.17 & 0.58 \\
6000  & 9.2429 (3.6473) & 4.0068 & \textbf{3.6790} & 4.1341 & 2.24 & 0.92 \\
8000  & 18.6919 (7.1702) & 8.7599 & 8.4552 & \textbf{8.1943} & 2.28 & 0.97 \\
10000 & 31.6802 (13.3527) & 15.1721 & 15.8802 & \textbf{13.6469} & 2.32 & 1.05 \\

\midrule
\multicolumn{7}{c}{$m = 0.5n$} \\
\midrule
$n$ & SOCP (eig time) & LTRSR & ADMM & CD-ADMM & Ratio1  & Ratio2 \\
\midrule
1000  & 0.1320 (0.0921) & \textbf{0.0331} & 0.0421 & 0.0681 & 1.94 & 1.27 \\
2000  & 0.4717 (0.3130) & \textbf{0.1217} & 0.1221 & 0.2500 & 1.89 & 1.00 \\
4000  & 2.7886 (1.6494) & \textbf{0.6481} & 1.0214 & 1.5640 & 1.78 & 1.58 \\
6000  & 7.3335 (3.0921) & \textbf{1.4399} & 3.3971 & 4.0185 & 1.82 & 2.36 \\
8000  & 14.7920 (6.3160) & \textbf{2.5178} & 7.7525 & 7.6182 & 1.94 & 3.08 \\
10000 & 28.8792 (11.1455) & \textbf{3.7470} & 14.6757 & 12.4680 & 2.32 & 3.92 \\

\bottomrule
\end{tabular}
\end{table}

\begin{table}[htbp]
\centering
\caption{Relative error on synthetic data (dense), $\gamma = 0.1$}
\label{tab:relerr_dense_gamma01}
\begin{tabular}{l ccc ccc}
\toprule
\multirow{2}{*}{$\gamma = 0.1$} &
\multicolumn{3}{c}{$(f_{\mathrm{SOCP}}-f_{\mathrm{LTRSR}})/|f_{\mathrm{SOCP}}|$} &
\multicolumn{3}{c}{$(f_{\mathrm{SOCP}}-f_{\mathrm{CD-ADMM}})/|f_{\mathrm{SOCP}}|$} \\
\cmidrule(lr){2-4}\cmidrule(lr){5-7}
& \textbf{AVG} & \textbf{MIN} & \textbf{MAX} & \textbf{AVG} & \textbf{MIN} & \textbf{MAX} \\
\midrule
$m=2n$   & 5.36E-10 & 1.43E-12 & 1.96E-09 & 1.08E-09 & 4.22E-10 & 2.34E-09 \\
$m=n$    & 1.88E-08 & 3.26E-12 & 1.10E-07 & 1.95E-08 & 7.47E-10 & 1.10E-07 \\
$m=0.5n$ & 9.07E-11 & 4.01E-13 & 2.69E-10 & 5.91E-10 & 2.84E-10 & 9.56E-10 \\
\bottomrule
\end{tabular}
\end{table}

\begin{table}[htbp]
\centering
\caption{Relative error on synthetic data (dense), $\gamma=0.01$}
\label{tab:relerr_dense_gamma001}
\begin{tabular}{l ccc ccc}
\toprule
\multirow{2}{*}{$\gamma = 0.01$} &
\multicolumn{3}{c}{$(f_{\mathrm{SOCP}}-f_{\mathrm{LTRSR}})/|f_{\mathrm{SOCP}}|$} &
\multicolumn{3}{c}{$(f_{\mathrm{SOCP}}-f_{\mathrm{CD-ADMM}})/|f_{\mathrm{SOCP}}|$} \\
\cmidrule(lr){2-4}\cmidrule(lr){5-7}
& \textbf{AVG} & \textbf{MIN} & \textbf{MAX} & \textbf{AVG} & \textbf{MIN} & \textbf{MAX} \\
\midrule
$m=2n$   & 2.04E-08 & 3.04E-09 & 4.31E-08 & 2.13E-08 & 4.09E-09 & 4.41E-08 \\
$m=n$    & 2.50E-08 & 5.03E-11 & 5.88E-08 & 2.60E-08 & 1.16E-09 & 5.98E-08 \\
$m=0.5n$ & 5.72E-09 & 8.11E-12 & 2.23E-08 & 6.65E-09 & 9.15E-10 & 2.34E-08 \\
\bottomrule
\end{tabular}
\end{table}

\subsubsection{Sparse Data}
\label{app:sparse}
We process to conduct experiments on a large-scale sparse dataset with varying sparsity levels and sample size  $m \in \{0.5n, n, 2n, 3n\}$. Tables \ref{tab:time_sparse_big_gamma01} and \ref{tab:time_sparse_big_gamma001} report runtimes across a spectrum of sparsity levels ($10^{-2}, 10^{-3}, 10^{-4}$) and penalty parameters ($\gamma=0.1$ and $\gamma=0.01$). A distinct trend is observed: the computational advantage of CD-ADMM becomes increasingly pronounced as sparsity increases. In highly sparse regimes ($10^{-4}$), CD-ADMM effectively leverages the sparse Cholesky factorization, achieving speedups of up to two orders of magnitude over standard ADMM and SOCP. While the iterative solver LTRSR remains competitive in denser settings ($10^{-2}$), CD-ADMM consistently outperforms it in high-dimensional sparse scenarios. Furthermore, the similarity in performance patterns between Table 8 and Table 9 confirms that the proposed method is robust to variations in the hyperparameter $\gamma$, maintaining its efficiency dominance regardless of the penalty weight. Finally, Table \ref{tab:relerr_sparse_big}
similarly demonstrates that, our algorithm achieves precision comparable to state-of-the-art methods, even on sparse datasets.
\begin{table}[htbp]
\centering
\caption{Time (seconds) on synthetic data with different sparsity ($\gamma=0.1$).
Ratio1 $=\mathrm{SOCP}/\mathrm{CD\text{-}ADMM}$ and
Ratio2 $=\mathrm{ADMM}/\mathrm{CD\text{-}ADMM}$.}
\label{tab:time_sparse_big_gamma01}
\resizebox{\textwidth}{!}{%
\begin{tabular}{r cccccc|cccccc|cccccc}
\toprule

\multicolumn{19}{c}{$m=0.5n$} \\
\midrule
& \multicolumn{6}{c}{sparsity $=0.01$}
& \multicolumn{6}{c}{sparsity $=0.001$}
& \multicolumn{6}{c}{sparsity $=0.0001$} \\
\cmidrule(lr){2-7}\cmidrule(lr){8-13}\cmidrule(lr){14-19}
$n$
& SOCP (eig) & LTRSR & ADMM & CD-ADMM & Ratio1 & Ratio2
& SOCP (eig) & LTRSR & ADMM & CD-ADMM & Ratio1 & Ratio2
& SOCP (eig) & LTRSR & ADMM & CD-ADMM & Ratio1 & Ratio2 \\
\midrule
10000 & 21.658 (12.959) & \textbf{0.748} & 13.781 & 3.120 & 6.94 & 4.42
      & 17.247 (11.126) & \textbf{0.592} & 13.046 & 1.281 & 13.46 & 10.18
      & 12.196 (11.623) & 0.653 & 13.390 & \textbf{0.085} & 143 & 157 \\
15000 & 69.211 (41.838) & \textbf{2.625} & 44.617 & 10.262 & 6.75 & 4.35
      & 58.016 (36.962) & \textbf{0.612} & 43.016 & 3.000 & 19.34 & 14.34
      & 34.800 (33.016) & 0.638 & 42.199 & \textbf{0.116} & 301 & 365 \\
20000 & 157.264 (96.395) & \textbf{2.635} & 102.099 & 20.018 & 7.85 & 5.10
      & 148.782 (101.735) & \textbf{0.799} & 98.839 & 7.079 & 21.02 & 13.97
      & 79.709 (72.396) & 0.842 & 100.689 & \textbf{0.310} & 257 & 325 \\
25000 & 303.045 (185.153) & \textbf{3.385} & 194.853 & 35.589 & 8.52 & 5.47
      & 270.733 (179.852) & \textbf{0.831} & 191.138 & 14.148 & 19.15 & 13.51
      & 186.674 (165.716) & 1.090 & 186.911 & \textbf{0.442} & 423 & 423 \\
30000 & 516.413 (315.055) & \textbf{6.479} & 332.583 & 62.835 & 8.22 & 5.29
      & 466.924 (309.113) & \textbf{1.053} & 327.535 & 26.245 & 17.80 & 12.48
      & 342.463 (293.426) & 1.347 & 318.067 & \textbf{0.684} & 501 & 465 \\

\midrule
\multicolumn{19}{c}{$m=n$} \\
\midrule
10000 & 24.235 (14.142) & \textbf{2.445} & 14.087 & 4.821 & 5.03 & 2.92
      & 23.289 (16.135) & \textbf{0.688} & 4.380 & 1.257 & 18.52 & 3.48
      & 14.041 (12.492) & 0.502 & 13.749 & \textbf{0.049} & 285 & 279 \\
15000 & 78.719 (48.351) & \textbf{5.863} & 44.413 & 11.864 & 6.63 & 3.74
      & 69.417 (47.868) & \textbf{1.229} & 42.699 & 3.471 & 20.00 & 12.30
      & 47.203 (41.625) & 0.757 & 43.796 & \textbf{0.179} & 264 & 245 \\
20000 & 176.435 (110.379) & \textbf{11.577} & 102.001 & 23.215 & 7.60 & 4.39
      & 158.147 (108.667) & \textbf{1.997} & 99.463 & 9.054 & 17.47 & 11.00
      & 118.122 (96.957) & 0.976 & 100.073 & \textbf{0.377} & 313 & 266 \\
25000 & 335.268 (206.268) & \textbf{20.036} & 195.198 & 44.206 & 7.58 & 4.42
      & 309.443 (214.607) & \textbf{2.827} & 192.903 & 19.980 & 15.49 & 9.66
      & 236.470 (188.916) & 1.266 & 191.305 & \textbf{0.829} & 285 & 231 \\
30000 & 572.490 (359.249) & \textbf{29.889} & 332.252 & 70.137 & 8.16 & 4.74
      & 517.747 (356.447) & \textbf{3.867} & 330.581 & 36.157 & 14.32 & 9.14
      & 432.633 (333.367) & \textbf{1.694} & 88.595 & 2.006 & 216 & 44 \\

\midrule
\multicolumn{19}{c}{$m=2n$} \\
\midrule
10000 & 26.750 (15.760) & \textbf{2.035} & 14.023 & 5.542 & 4.83 & 2.53
      & 23.480 (16.154) & \textbf{0.391} & 4.211 & 1.304 & 18.01 & 3.23
      & 16.205 (13.393) & 0.701 & 13.726 & \textbf{0.101} & 160 & 136 \\
15000 & 82.755 (50.753) & \textbf{5.143} & 44.379 & 12.983 & 6.37 & 3.42
      & 70.787 (48.468) & \textbf{0.707} & 43.100 & 5.064 & 13.98 & 8.51
      & 55.850 (45.386) & 0.946 & 12.803 & \textbf{0.259} & 215 & 49 \\
20000 & 179.779 (108.653) & \textbf{9.695} & 101.948 & 25.130 & 7.15 & 4.06
      & 159.798 (108.907) & \textbf{1.152} & 100.521 & 13.141 & 12.16 & 7.65
      & 142.832 (109.946) & 1.315 & 27.696 & \textbf{0.811} & 176 & 34 \\
25000 & 333.025 (202.739) & \textbf{14.778} & 189.989 & 47.342 & 7.03 & 4.01
      & 315.903 (217.752) & \textbf{1.712} & 193.358 & 24.579 & 12.85 & 7.87
      & 281.458 (209.865) & \textbf{1.720} & 53.666 & 2.568 & 110 & 21 \\
30000 & 556.265 (337.624) & \textbf{22.828} & 332.847 & 73.648 & 7.55 & 4.52
      & 529.428 (365.183) & \textbf{2.551} & 329.892 & 37.063 & 14.29 & 8.90
      & 481.532 (353.768) & \textbf{2.227} & 90.885 & 5.790 & 83 & 16 \\

\midrule
\multicolumn{19}{c}{$m=3n$} \\
\midrule
10000 & 29.303 (16.817) & \textbf{3.087} & 14.070 & 5.793 & 5.06 & 2.43
      & 21.990 (14.791) & \textbf{0.483} & 4.281 & 1.656 & 13.28 & 2.59
      & 18.850 (14.842) & 0.810 & 13.647 & \textbf{0.109} & 173 & 125 \\
15000 & 84.976 (51.759) & \textbf{7.304} & 44.527 & 13.457 & 6.31 & 3.31
      & 69.951 (46.908) & \textbf{0.915} & 43.519 & 6.469 & 10.81 & 6.72
      & 61.937 (47.430) & 1.204 & 12.798 & \textbf{0.434} & 143 & 29 \\
20000 & 185.337 (114.060) & \textbf{13.939} & 102.332 & 25.340 & 7.31 & 4.04
      & 164.985 (113.201) & \textbf{1.559} & 100.785 & 14.078 & 11.72 & 7.16
      & 150.974 (112.789) & 1.619 & 28.738 & \textbf{1.510} & 100 & 19 \\
25000 & 364.347 (225.367) & \textbf{20.751} & 197.715 & 49.340 & 7.38 & 4.01
      & 323.304 (224.721) & \textbf{2.391} & 190.888 & 23.536 & 13.73 & 8.11
      & 295.949 (218.266) & \textbf{2.118} & 54.505 & 3.837 & 77 & 14 \\
30000 & 606.143 (381.284) & \textbf{30.478} & 326.553 & 73.253 & 8.27 & 4.46
      & 506.082 (339.917) & \textbf{3.532} & 332.294 & 39.415 & 12.84 & 8.43
      & 489.773 (343.090) & \textbf{2.779} & 83.778 & 8.972 & 55 & 9 \\

\bottomrule
\end{tabular}%
}
\end{table}

\begin{table}[htbp]
\centering
\caption{Time (seconds) on synthetic data with different sparsity ($\gamma=0.01$). 
Ratio1 $=\mathrm{SOCP}/\mathrm{CD\text{-}ADMM}$ and Ratio2 $=\mathrm{ADMM}/\mathrm{CD\text{-}ADMM}$.}
\label{tab:time_sparse_big_gamma001}
\resizebox{\textwidth}{!}{%
\begin{tabular}{r cccccc|cccccc|cccccc}
\toprule

\multicolumn{19}{c}{$m=0.5n$} \\
\midrule
& \multicolumn{6}{c}{sparsity $=0.01$} & \multicolumn{6}{c}{sparsity $=0.001$} & \multicolumn{6}{c}{sparsity $=0.0001$} \\
\cmidrule(lr){2-7}\cmidrule(lr){8-13}\cmidrule(lr){14-19}
$n$
& SOCP (eig) & LTRSR & ADMM & CD-ADMM & Ratio1 & Ratio2
& SOCP (eig) & LTRSR & ADMM & CD-ADMM & Ratio1 & Ratio2
& SOCP (eig) & LTRSR & ADMM & CD-ADMM & Ratio1 & Ratio2 \\
\midrule
10000 & 22.237 (11.771) & \textbf{0.570} & 13.518 & 2.981 & 7.46 & 4.53
      & 19.090 (12.014) & \textbf{0.552} & 13.724 & 1.485 & 12.85 & 9.24
      & 12.013 (11.282) & 0.357 & 13.793 & \textbf{0.046} & 261 & 299 \\
15000 & 62.389 (35.280) & \textbf{1.194} & 42.825 & 10.110 & 6.17 & 4.24
      & 57.295 (35.729) & \textbf{0.402} & 42.692 & 3.158 & 18.14 & 13.52
      & 38.071 (35.918) & 0.561 & 43.551 & \textbf{0.097} & 394 & 450 \\
20000 & 140.342 (79.182) & \textbf{1.294} & 101.784 & 20.256 & 6.93 & 5.03
      & 151.050 (102.640) & \textbf{0.658} & 98.581 & 6.506 & 23.22 & 15.15
      & 101.881 (94.509) & 0.757 & 100.477 & \textbf{0.287} & 355 & 350 \\
25000 & 278.835 (162.104) & \textbf{2.071} & 195.302 & 36.283 & 7.68 & 5.38
      & 283.767 (192.517) & \textbf{0.584} & 190.669 & 14.627 & 19.40 & 13.04
      & 205.124 (182.291) & 0.985 & 193.426 & \textbf{0.456} & 450 & 424 \\
30000 & 451.894 (254.626) & \textbf{3.278} & 332.282 & 62.947 & 7.18 & 5.28
      & 473.235 (314.728) & \textbf{0.700} & 327.898 & 24.745 & 19.12 & 13.25
      & 356.263 (306.654) & 1.245 & 329.001 & \textbf{0.780} & 457 & 422 \\

\midrule
\multicolumn{19}{c}{$m=n$} \\
\midrule
10000 & 23.001 (13.225) & \textbf{2.133} & 14.200 & 4.784 & 4.81 & 2.97
      & 23.281 (15.347) & \textbf{0.756} & 4.011 & 1.350 & 17.24 & 2.97
      & 14.316 (12.903) & 0.565 & 13.670 & \textbf{0.057} & 250 & 239 \\
15000 & 71.542 (41.574) & \textbf{5.129} & 44.204 & 12.185 & 5.87 & 3.63
      & 66.713 (44.664) & \textbf{1.244} & 42.969 & 3.577 & 18.65 & 12.01
      & 47.656 (42.067) & 0.911 & 43.584 & \textbf{0.182} & 262 & 240 \\
20000 & 166.537 (96.336) & \textbf{10.129} & 101.758 & 23.135 & 7.20 & 4.40
      & 149.724 (98.697) & \textbf{1.940} & 99.158 & 9.743 & 15.37 & 10.18
      & 116.862 (94.595) & 1.226 & 106.407 & \textbf{0.397} & 295 & 268 \\
25000 & 322.529 (190.283) & \textbf{17.130} & 194.763 & 44.856 & 7.19 & 4.34
      & 284.743 (189.496) & \textbf{2.661} & 189.389 & 18.997 & 14.99 & 9.97
      & 201.070 (151.940) & 1.377 & 190.995 & \textbf{0.962} & 209 & 199 \\
30000 & 546.889 (333.136) & \textbf{25.728} & 331.484 & 64.844 & 8.43 & 5.11
      & 489.148 (321.497) & \textbf{3.640} & 329.250 & 34.344 & 14.24 & 9.59
      & 375.574 (273.864) & \textbf{1.764} & 84.047 & 2.115 & 178 & 40 \\

\midrule
\multicolumn{19}{c}{$m=2n$} \\
\midrule
10000 & 25.146 (13.721) & \textbf{1.413} & 14.004 & 5.390 & 4.66 & 2.60
      & 23.289 (15.424) & \textbf{0.326} & 4.101 & 1.390 & 16.76 & 2.95
      & 15.308 (12.239) & 0.679 & 13.231 & \textbf{0.105} & 145 & 125 \\
15000 & 80.426 (47.778) & \textbf{3.584} & 44.271 & 13.067 & 6.16 & 3.39
      & 73.663 (46.959) & \textbf{0.577} & 43.146 & 5.117 & 14.40 & 8.43
      & 47.757 (36.988) & 0.956 & 11.161 & \textbf{0.283} & 169 & 39 \\
20000 & 180.853 (111.097) & \textbf{6.701} & 101.991 & 25.745 & 7.03 & 3.96
      & 152.717 (102.941) & \textbf{0.923} & 97.745 & 13.436 & 11.36 & 7.27
      & 134.140 (101.145) & 1.371 & 27.368 & \textbf{0.951} & 141 & 29 \\
25000 & 342.930 (208.918) & \textbf{11.008} & 196.258 & 47.649 & 7.20 & 4.12
      & 277.707 (181.536) & \textbf{1.352} & 193.829 & 24.062 & 11.54 & 8.06
      & 280.387 (204.361) & \textbf{1.849} & 51.160 & 2.685 & 104 & 19 \\
30000 & 571.272 (350.460) & \textbf{16.536} & 332.245 & 73.717 & 7.75 & 4.51
      & 515.975 (350.867) & \textbf{1.944} & 330.382 & 36.580 & 14.10 & 9.03
      & 469.252 (342.067) & \textbf{2.127} & 89.099 & 6.233 & 75 & 14 \\

\midrule
\multicolumn{19}{c}{$m=3n$} \\
\midrule
10000 & 28.990 (17.289) & \textbf{2.280} & 14.093 & 5.787 & 5.01 & 2.44
      & 23.354 (16.035) & \textbf{0.403} & 4.640 & 1.769 & 13.20 & 2.62
      & 17.604 (13.047) & 0.805 & 13.644 & \textbf{0.124} & 142 & 110 \\
15000 & 85.736 (52.285) & \textbf{5.376} & 44.484 & 13.552 & 6.33 & 3.28
      & 72.032 (48.975) & \textbf{0.717} & 43.432 & 6.579 & 10.95 & 6.60
      & 60.629 (45.448) & 1.239 & 12.412 & \textbf{0.520} & 109 & 23.86 \\
20000 & 185.056 (113.525) & \textbf{10.160} & 102.261 & 26.542 & 6.97 & 3.85
      & 171.656 (113.202) & \textbf{1.227} & 100.536 & 14.227 & 12.07 & 7.06
      & 153.372 (109.970) & \textbf{1.650} & 27.850 & 1.677 & 91 & 17 \\
25000 & 361.728 (225.719) & \textbf{16.386} & 198.233 & 49.144 & 7.36 & 4.03
      & 327.448 (225.430) & \textbf{1.896} & 193.042 & 22.877 & 14.32 & 8.44
      & 291.419 (211.783) & \textbf{2.085} & 54.146 & 4.450 & 66 & 12 \\
30000 & 611.816 (386.264) & \textbf{24.232} & 338.281 & 75.400 & 8.11 & 4.49
      & 557.815 (386.235) & \textbf{2.666} & 333.868 & 39.535 & 14.11 & 8.45
      & 510.387 (355.942) & \textbf{2.601} & 92.269 & 10.413 & 49 & 9 \\

\bottomrule
\end{tabular}%
}
\end{table}


\begin{table}[htbp]
\centering
\caption{Relative error on synthetic dataset with different sparsity levels.
}
\label{tab:relerr_sparse_big}
\resizebox{\textwidth}{!}{
\begin{tabular}{l ccc ccc l ccc ccc}
\toprule

\multicolumn{14}{c}{sparsity $=0.0001$} \\
\midrule
\multirow{2}{*}{$\gamma = 0.1$} &
\multicolumn{3}{c}{$(f_{\mathrm{SOCP}}-f_{LTRSR})/|f_{\mathrm{SOCP}}|$} &
\multicolumn{3}{c}{$(f_{\mathrm{SOCP}}-f_{CD-ADMM})/|f_{\mathrm{SOCP}}|$} &
\multirow{2}{*}{$\gamma = 0.01$} &
\multicolumn{3}{c}{$(f_{\mathrm{SOCP}}-f_{LTRSR})/|f_{\mathrm{SOCP}}|$} &
\multicolumn{3}{c}{$(f_{\mathrm{SOCP}}-f_{CD-ADMM})/|f_{\mathrm{SOCP}}|$} \\
\cmidrule(lr){2-4}\cmidrule(lr){5-7}\cmidrule(lr){9-11}\cmidrule(lr){12-14}
& AVG & MIN & MAX & AVG & MIN & MAX
& & AVG & MIN & MAX & AVG & MIN & MAX \\
\midrule
$m=3n$   & 3.08E-10 & 2.70E-13 & 1.32E-09 & 3.09E-10 & 3.96E-13 & 1.32E-09
         & $m=3n$   & 3.07E-09 & 2.22E-10 & 6.46E-09 & 3.07E-09 & 2.22E-10 & 6.46E-09 \\
$m=2n$   & 1.05E-09 & 1.79E-11 & 2.93E-09 & 1.05E-09 & 1.81E-11 & 2.93E-09
         & $m=2n$   & 2.36E-09 & 3.52E-11 & 1.10E-08 & 2.37E-09 & 6.75E-11 & 1.10E-08 \\
$m=n$    & 3.88E-09 & 5.26E-13 & 1.92E-08 & 3.90E-09 & 5.86E-13 & 1.92E-08
         & $m=n$    & 6.89E-10 & 8.48E-11 & 1.70E-09 & 3.93E-09 & 8.48E-11 & 1.71E-08 \\
$m=0.5n$ & 2.94E-09 & 2.75E-12 & 1.44E-08 & 4.96E-09 & 7.29E-11 & 1.44E-08
         & $m=0.5n$ & 1.73E-09 & 6.10E-11 & 6.68E-09 & 8.06E-07 & 2.12E-10 & 4.01E-06 \\

\midrule
\multicolumn{14}{c}{sparsity $=0.001$} \\
\midrule
$m=3n$   & 2.76E-08 & 5.21E-10 & 6.51E-08 & 2.78E-08 & 5.28E-10 & 6.53E-08
         & $m=3n$   & 7.07E-08 & 5.18E-10 & 3.11E-07 & 7.12E-08 & 1.29E-09 & 3.12E-07 \\
$m=2n$   & 2.66E-10 & 2.62E-12 & 7.55E-10 & 4.07E-10 & 1.05E-10 & 7.64E-10
         & $m=2n$   & 1.04E-07 & 1.18E-08 & 3.89E-07 & 1.04E-07 & 1.27E-08 & 3.89E-07 \\
$m=n$    & 8.14E-09 & 6.00E-10 & 2.97E-08 & 8.15E-09 & 6.00E-10 & 2.97E-08
         & $m=n$    & 2.87E-08 & 1.53E-08 & 4.32E-08 & 2.88E-08 & 1.53E-08 & 4.32E-08 \\
$m=0.5n$ & 6.78E-10 & 8.28E-13 & 2.42E-09 & 6.80E-10 & 7.32E-13 & 2.43E-09
         & $m=0.5n$ & 1.04E-08 & 7.25E-11 & 2.73E-08 & 1.04E-08 & 7.25E-11 & 2.73E-08 \\

\midrule
\multicolumn{14}{c}{sparsity $=0.01$} \\
\midrule
$m=3n$   & 4.18E-07 & 1.20E-11 & 2.08E-06 & 4.19E-07 & 8.65E-10 & 2.08E-06
         & $m=3n$   & 7.66E-04 & 4.85E-08 & 3.83E-03 & 7.66E-04 & 4.96E-08 & 3.83E-03 \\
$m=2n$   & 1.55E-07 & 8.94E-11 & 7.56E-07 & 1.55E-07 & 1.07E-09 & 7.57E-07
         & $m=2n$   & 1.20E-06 & 1.61E-07 & 3.51E-06 & 1.20E-06 & 1.62E-07 & 3.51E-06 \\
$m=n$    & 2.25E-07 & 2.26E-10 & 1.11E-06 & 2.26E-07 & 1.08E-09 & 1.11E-06
         & $m=n$    & 1.30E-06 & 2.99E-08 & 2.73E-06 & 1.31E-06 & 3.10E-08 & 2.73E-06 \\
$m=0.5n$ & 1.51E-09 & 2.94E-11 & 4.23E-09 & 2.26E-09 & 2.43E-10 & 5.14E-09
         & $m=0.5n$ & 2.24E-05 & 6.07E-14 & 1.11E-04 & 2.24E-05 & 3.38E-11 & 1.11E-04 \\

\bottomrule
\end{tabular}}
\end{table}

\end{document}